





\documentclass[sn-basic]{sn-jnl}

%

\usepackage{float}
\usepackage{latexsym}
\usepackage{amsmath} 
\usepackage{amsfonts}       
\usepackage{amsthm} 
\usepackage{bbding}         
\usepackage{bm}             
\usepackage{graphicx}    
\usepackage{caption} 
\usepackage[labelfont=bf]{caption}
\captionsetup[figure]{font=small}
\usepackage{caption}
\DeclareCaptionLabelSeparator{twospace}{\ ~}   
\captionsetup{labelsep=twospace}
\usepackage{caption}
\DeclareCaptionLabelSeparator{twospace}{\ ~}   
\captionsetup{labelsep=twospace}
\usepackage{subcaption}
\usepackage{fancyvrb}       
\usepackage[nottoc]{tocbibind} 
\usepackage{dcolumn}        
\usepackage{booktabs}       
\usepackage{paralist}       


\jyear{2021}


\newtheorem{myrem}{Remark}
\newtheorem{mythm}{Theorem}

\newtheorem{mycol}{Corollary}
\newtheorem{mycon}{Conjecture}

\begin{document}

\title[Analysing the Effect of Test-and-Trace Strategy in an SIR Epidemic Model]{Analysing the Effect of Test-and-Trace Strategy in an SIR Epidemic Model}


\author*[1]{\fnm{Dongni} \sur{Zhang}}\email{dongni.zhang@math.su.se}

\author[1]{\fnm{Tom} \sur{Britton}}\email{tom.britton@math.su.se}

\affil[1]{\orgdiv{Department of Mathematics}, \orgname{Stockholm University}, \orgaddress{\city{Stockholm}, \country{Sweden}}}

\abstract{Consider a Markovian SIR epidemic model in a homogeneous community. To this model we add a rate at which individuals are tested, and once an infectious individual tests positive it is isolated and each of their contacts are traced and tested independently with some fixed probability. If such a traced individual tests positive it is isolated, and the contact tracing is iterated. This model is analysed using large population approximations, both for the early stage of the epidemic when the "to-be-traced components" of the epidemic behaves like a branching process, and for the main stage of the epidemic where the process of to-be-traced components converges to a deterministic process defined by a system of differential equations. These approximations are used to quantify the effect of testing and of contact tracing on the effective reproduction numbers (for the components as well as for the individuals), the probability of a major outbreak, and the final fraction getting infected. Using numerical illustrations when rates of infection and natural recovery are fixed, it is shown that Test-and-Trace strategy is effective in reducing the reproduction number. Surprisingly, the reproduction number for the branching process of components is not monotonically decreasing in the tracing probability, but the individual reproduction number is conjectured to be monotonic as expected. Further, in the situation where individuals also self-report for testing, the tracing probability is more influential than the screening rate (measured by the fraction infected being screened).}

\keywords{Epidemic Model, Contact Tracing, Branching Process, Testing, Reproduction Number}



\maketitle

\section{Introduction}
\label{sec:intro}

An important reason for modelling the spread of infectious diseases lies in better understanding the effect of various preventive measures, such as lockdown, social distancing, contact tracing, testing, self-isolating and quarantining. During the ongoing pandemic of Covid-19 the Test-and-Trace (TT) strategy has received a lot of attention
\citep{kendall_epidemiological_2020,lucas_engagement_2020, bradshaw_bidirectional_2021}. The test-part of the strategy means that testing (of suspected cases and/or randomly chosen individuals) is increased, with the hope that finding infected individuals quickly and isolating them will reduce the transmission. The trace-part of the strategy is that individuals who are tested positive are quickly questioned about their recent contacts, and such contacts are then localized, tested and isolated if testing positive. 

Contact tracing and its effect has been studied both from a theoretical perspective, and during the ongoing Covid-19 pandemic also from an applied point of view, including procedures for estimation of model parameters. For Covid-19 it has been observed in several countries that contact tracing is a highly powerful intervention measure. For instance, in the UK study \citep{kendall_epidemiological_2020} of the TT-program first carried out on the Isle of Wight, they concluded that the number of new confirmed Covid-19 cases decreased more sharply after the TT-intervention.  In \citep{lucas_engagement_2020}, they found that it is unlikely for strict self-isolation policies to improve the effectiveness by means of contact tracing. In particular, the effect of contact tracing on controlling the Covid-19 epidemic has been studied based on simulation models in several papers (e.g. \citep{di_domenico_impact_2020,firth_using_2020, keeling_efficacy_2020, kretzschmar_2020}). One benefit of simulation models is that the results are easier to interpret, whereas one shortcoming is that they are difficult to be analysed analytically. Our paper is concerned with rigorous large population approximations for a stochastic epidemic model using theory for branching processes.


More theoretical studies often make simplifying assumptions in order to make more analytical progress. For example, under the assumption of a homogeneous mixing population, \citep{ball_threshold_2011,ball_stochastic_2015} consider the traditional SIR model with forward tracing (where either a fraction of the infectees of a parent case is tested or none of the contacts is reported) but without tracing backwards to infectors of tested individuals. The model in \citep{bradshaw_bidirectional_2021} suggests that both backward and forward tracing would remarkably increase the effectiveness of Covid-19 epidemic control.
\citep{muller_contact_2000} deals with a stochastic SIRS model among a homogeneous mixing population. Once infectious individuals are discovered, each of their possible infectious contacts will be traced and treated (equivalent to isolated) with some probability. It is analysed (with focus on the age since infection) gradually in three cases, namely backward tracing, forward tracing and tracing both ways. They derive the critical tracing probability for reducing the effective reproduction number to below 1 so that a major outbreaks no longer may occur.
\citep{ball_threshold_2011} is concerned with a SIR epidemic model in a homogeneous mixing population.
Diagnosed individuals in \citep{ball_threshold_2011} are asked to name a fraction of their infectees, who will be isolated (i.e. removed) immediately if they have not been diagnosed earlier. Further, those traced individuals are asked to name their infectious contacts in the same way, otherwise none of their contacts will be named. The model studied in \citep{ball_stochastic_2015} extends the one in \citep{ball_threshold_2011} by introducing the exposed period and tracing delays, where the infectious individuals who share the same infector are traced after independent delay times. It is also assumed that untraced individuals may not be asked to name their infectees, for instance when they are asymptomatic. Numerical results in \citep{ball_stochastic_2015} indicate that independent delay times have bigger effect on the spreading compared to the situation where the delay times from one infected individual being contact traced are all the same. Recently, 
\citep{muller_contact_2021} adds super-spreader events, where several 
individuals may get infected by one infector at the same time, to a model having contact tracing similar to \citep{muller_contact_2000}.

Beside forward and backward tracing, \mbox{\citep{sideward_ct}} suggests that there is a "sideward" tracing when tracing large gatherings (where infected asymptomatic individuals could be traced even if they are neither the infectees nor the infector of the index case). Additionally, \citep{barlow_branching_2020} analyse the epidemic model as a branching process with contact tracing on top of its genealogical tree. According to their assumption, each infective will be detected with some probability after a certain number of generations and then each of its contacts will be successfully traced with some probability. 
As in our paper, they focus on the evolution of "traceable clusters" (in our paper called "to-be-reported components"), but from a different point of view: they extend the percolation-based analysis to contact tracing and give the approximated expression of the probability of extinction.

In the present paper we consider an SIR model with TT strategy in which not all individuals necessarily are traced, tracing is both backward and forward, but on the other hand assuming no latent periods and no delay before contact tracing happens. 
To conclude, we study a stochastic SIR epidemic model including Test-and-Trace prevention for a large finite population. The Test-feature is modelled by assuming that infectious individuals are tested (screened) at a constant rate, and for tracing we assume that an individual who tests positive reports each contact independently and reported contacts are traced and tested without delay. In the tracing procedure we assume that 
currently infectious as well as those who by now have recovered are 
identified, and that the tracing procedure is iterated for both categories.

More precisely, we analyse a homogeneous SIR epidemic model having four parameters: the rate of infectious contacts $\beta$, the recovery rate $\gamma$, the testing rate $\delta$ for infectious individuals, and the fraction $p$ of contacts that are reached in the contact tracing procedure. Using large population approximations, we analyse both the initial phase of the epidemic (where it behaves like a certain branching process) and the main phase of the epidemic when it can be approximated by a deterministic process.
The main focus of the paper is to shed light on how much is gained from the TT-strategy. For example, how should resources optimally be distributed between testing and contact tracing, how much would the reproduction number be reduced for achievable levels in the TT-strategy, and so on.

In Section \ref{sec:main-results} we present the model details and our main results and some intuitions for how the results are obtained. In Section \ref{sec:earlystage} and \ref{sec:branchingprocess} we give more details and proofs to the analyses of the initial stage of the epidemic and its main phase in Section \ref{sec:mainphase}. In Section~\ref{sec:numerical} we report simulations and numerical studies of the model and study how effective the TT-strategy is. The paper ends with a conclusion where extensions and possible improvements are discussed.

\section{Model and Main Results}
\label{sec:main-results}
\subsection{The Standard SIR Model }\label{sec:model_SIR}
We start with a Markovian \textbf{SIR} (\textbf{S}usceptible $\rightarrow$
\textbf{I}nfectious
$\rightarrow$
\textbf{R}ecovered) epidemic spreading in a closed and homogeneous mixing population. By closed, we mean that there is no influx of new susceptibles or death. At any time point, each individual is either susceptible, infectious or recovered. We assume that the size of population is $n$, and that initially one individual is infectious individual and the rest are susceptible. Each susceptible becomes infectious once he/she makes contact with an infective. Times at which such contacts between two given individuals occurs are constructed by a homogeneous Poisson process with rate $\beta/n$. Equivalently, an infectious individual has 
contacts at rate $\beta$, each time with a uniformly selected individual each thus having probability $1/n$. Only contacts with susceptibles result in infection whereas other contacts have no effect. Once an individual gets infected, he/she remains infectious for a random time $T_{I}.$ We assume that the period $T_{I}$ is independent, exponentially distributed with mean $\mathbb{E}[T_{I}]=1/\gamma$, and the parameter $\gamma$ denotes the rate of natural recovery. So the underlying model is Markovian. Once naturally recovered, the individual plays no role in the spreading of epidemic. The epidemic stops when there is no infectives.
\subsection{The Markovian SIR-TT model}\label{sec:model_TT}
Now we incorporate our Test-and-Trace scheme into this SIR model. It is additionally assumed that infectious individual are tested at rate $\delta$ (possibly also non-infectious individuals are tested at this rate but this has no effect and is hence not assumed). Individuals that test positive are called diagnosed and diagnosed individuals are immediately isolated thus not taking further part in disease spreading. So, infectious individuals can stop spreading disease either from natural recovery (rate $\gamma$) or from being tested and diagnosed (rate $\delta$). Individuals that are diagnosed are also contact traced. This is modelled by assuming that a diagnosed individual reports each of its infectious contacts (both the infector and infectees) independently with probability $p$. The individuals that are traced in this way are tested, and individuals that test positive (either still being infectious or by then having recovered) are then contact traced in the same way (so contact tracing is iterated among those that have been infected). To simplify modelling we assume no delay in this contact tracing and instead assume that it all happens instantaneously. 

The SIR-TT model makes two simplifying assumptions, that 
contact tracing occurs without delay, and that also traced individuals 
who have by now recovered are contact traced. In reality tracing certainly takes some time, and individuals who have recovered several days 
or even weeks earlier would typically not be contact traced. The results 
from the present model can hence serve as an upper bound on how 
effective "real" contact tracing may be.
All contact and reporting processes as well as infectious periods are defined mutually independent. 
In Table ~\ref{tab:para} we list all the model parameters.

It is possible to consider different models for how different individuals would report in relation to each other. For instance, would an infector $A$ report an infectee $B$ independent of whether the infectee $B$ would report the infector $A$? A ''yes'' to the answer could for example happen if what defines a ''contact'' is not clarified enough so what $A$ considers as a contact may not coincide what $B$ thinks, and a ''no'' could happen if some of the contacts are with friends/acquaintances and other contacts are between unknown people, e.g.\ on the bus. However, it is clear from the model description that once $A$ or $B$ are diagnosed and asked to report their contacts the reporting event in the opposite direction is useless. This is true also if the first reporting event resulted in not naming the other individual: a later contact tracing of that individual has no effect on the first individual since he/she has already been diagnosed. As a consequence, all models for how contacts report each other (independently, symmetrically or some partial dependence) will result in exactly the same stochastic model. In our description below we have chosen to use the symmetric description thus assuming that A reports B if and only if B reports A, but this is only for practical purposes.

Our model assumes that infectious individuals either recover naturally (at rate $\gamma$) or are tested and diagnosed at rate $\delta$. There is an \textbf{alternative model interpretation}, which is more detailed in the sense that testing could take place also prior to screening. In addition to those who are found by screening, some infectious individuals may test themselves, e.g.\ due to symptoms. This scenario also fits into the present model by simply adding one more parameter. The parameters $\gamma$ and $\delta$ are unchanged: $\gamma$ is the rate of natural recovery and $\delta$ denotes the testing rate (screening). But now we add a rate $\nu$ at which infectious individuals self-report and test themselves. Both self-reporting and screening trigger contact tracing, so all that matters for the epidemic spreading is the sum $\nu+\delta$ of these two rates. As a consequence, this new model interpretation with 5 parameters $(\beta, \gamma, \delta, \nu, p)$ is identical to the original model with the following 4 parameter values $(\beta, \gamma, \delta + \nu, p)$. Since the alternative model interpretation fits into the original model all mathematical results from the original model apply. In Section \ref{sec:numerical} we give some numerical results also for the alternative model.
\begin{table}
\caption{Table with all model parameters}
\label{tab:para}       
\centering
\begin{tabular}{llll}
\hline\noalign{\smallskip}
Parameter & Notation \\ 
\noalign{\smallskip}\hline\noalign{\smallskip}
Infection rate & $\beta$  \\ 
Rate of natural recovery & $\gamma$ \\
Rate of diagnosis & $\delta$ \\
Tracing probability & $p$ \\
Size of population & $n$ \\
\noalign{\smallskip}\hline

\end{tabular}
\end{table}
\subsection{Main Results}


We start by considering the beginning of the 
epidemic where we prove that the epidemic, asymptotically as the 
population size grows to infinity, converges to a certain limit 
process.

We assume that there is one alive ancestor at time zero. Each alive individual gives birth at rate $\beta$, dies naturally with rate $\gamma$ and is removed from the population with rate $\delta$ (corresponding to being tested and diagnosed in the epidemic). Further, an individual who is removed also leads to that each of its offspring as well as its parent will be removed independently with probability $p$. All those that are removed in this step will in turn lead to that its parent and offspring will be removed independently with probability $p$ and so on. This limit process is identical to the SIR-TT epidemic process defined in Section \ref{sec:model_TT} (denoted by $E_n(\beta,\gamma,\delta,p)$) with one single exception. In the epidemic model an infectious individual infects new individuals at rate $\beta(S_n(t)/n)$ where $S_n(t)$ denotes the number of susceptibles at $t$, since only contacts with susceptibles result in infection and this has probability $S_n(t)/n$. On the other hand, in the limit process alive individuals give birth at constant rate $\beta$. Nevertheless, in the beginning and assuming a large population then these two rates will be close to each other since then $S_n(t)\approx n$.

The contact tracing mechanism induces a dependence 
between individuals both in the epidemic as well as the limiting 
process. Rather than studying \emph{individuals} we therefore analyse the process of to-be-reported \emph{components} (of individuals). More precisely, a new infection/birth is immediately decided if the involved individuals would report the other (with probability $p$) or not. If it will, then the new individual belongs to the same component but if it will not, the newly infected/born will create a new to-be-reported component. The reason for studying this more complicated description of the same process is that the to-be-reported components of the limit process behave completely independent thus making it a branching process. It is hence possible to use theory for branching process to determine if the process is sub- or super-critical and derive the probability for extinction/minor outbreak. We are now ready for our first main result.

\begin{mythm}
\label{theorem:early_approx}
For any finite time interval $[0,t_{0}],$ the SIR-TT epidemic process $E_{n}(\beta,\gamma,\delta,p)$ converges in distribution to the limit process $E(\beta,\gamma,\delta,p)$ as $n\rightarrow \infty.$
\end{mythm}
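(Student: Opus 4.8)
The plan is to prove the result by an explicit coupling of the epidemic $E_n$ and the limit process $E$ on a common probability space, and to show that the two processes \emph{coincide} on $[0,t_0]$ with probability tending to $1$ as $n\to\infty$. Since coincidence with high probability forces the total variation distance between the two laws (restricted to $[0,t_0]$) to vanish, this is in fact stronger than the claimed convergence in distribution, and the precise choice of (reasonable) topology on path space is immaterial. So the whole argument reduces to constructing a good coupling and estimating the probability that it breaks.

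First I would build both processes simultaneously from the same primitive randomness. To each individual, labelled in order of appearance, I attach an exponential clock of rate $\beta$ for infectious contacts/births, an independent exponential clock of rate $\gamma$ for natural recovery, and one of rate $\delta$ for diagnosis; to each contact/birth event I attach an independent $\mathrm{Bernoulli}(p)$ mark recording whether the two individuals involved would report one another. In the limit process $E$ every $\beta$-event produces a genuinely new individual, whereas in $E_n$ a $\beta$-event is a contact with an individual chosen uniformly among all $n$, and only contacts with susceptibles cause infection. I couple the two by identifying, as long as the uniformly chosen contact in $E_n$ lands on a hitherto-susceptible individual, that contact with the corresponding new birth in $E$; the recovery and diagnosis clocks and the reporting marks are taken literally equal in both processes. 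The key structural point is that the \emph{entire} contact-tracing cascade --- removal of parent and offspring with probability $p$, iterated through the to-be-reported component in both directions --- is a deterministic functional of the genealogy, the marks and the natural removal times. Hence once these underlying ingredients agree, all tracing-induced removals agree as well, and the two processes are identical.

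It therefore remains to bound the probability that the coupling breaks before $t_0$, and the only way it can break is a \emph{collision}: a contact in $E_n$ that lands on an already infected or recovered individual. Such a ``wasted'' contact is precisely what makes the effective epidemic infection rate $\beta S_n(t)/n$ rather than the constant $\beta$ of $E$, so this single event simultaneously accounts for both discrepancies between the models. To estimate it I use that the limit process is dominated by a pure birth (Yule) process of per-individual rate $\beta$ and is thus non-explosive, so the total number $N_{t_0}$ of individuals ever appearing in $[0,t_0]$ is finite almost surely, with a law not depending on $n$. Fixing $M$ with $\mathbb{P}(N_{t_0}>M)<\varepsilon$, on the event $\{N_{t_0}\le M\}$ there are at most $M-1$ contact events before the first collision, each landing on one of at most $M$ non-susceptibles, hence a collision with probability at most $M/n$. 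A union bound gives a collision probability of order $M^2/n\to 0$; letting $n\to\infty$ and then $\varepsilon\to 0$ completes the proof.

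The main obstacle I anticipate is the interplay between the contact-tracing dependence and this collision estimate: because tracing removes individuals (and so suppresses future contacts), counting contact events directly in $E_n$ would force one to track the component structure and its correlations. The clean way around this is to carry out all probabilistic bounds on the event $\{N_{t_0}\le M\}$ defined through the limit process $E$, whose distribution is free of $n$, so that the dependence created by tracing never enters the estimate; the coupling then transfers the bound to $E_n$ automatically. A preliminary point worth verifying is that the iterated tracing within each to-be-reported component terminates, which holds because each component is itself finite on $[0,t_0]$ by the same non-explosion argument.
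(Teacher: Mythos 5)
Your proposal is correct and follows essentially the same route as the paper: a Ball--Donnelly-type coupling of the epidemic with the limit process that agrees until the first ``ghost'' (a contact landing on a non-susceptible individual), together with a birthday-problem estimate showing that no ghost appears before $t_0$ with probability tending to one. If anything, your collision bound (domination by a rate-$\beta$ Yule process, truncation on $\{N_{t_0}\le M\}$ with $\mathbb{P}(N_{t_0}>M)<\varepsilon$, then a union bound of order $M^2/n\to 0$) is spelled out more carefully than the paper's, whose displayed estimate $e^{-n^2/2[t_{0}]}$ is evidently a misprint for the standard $e^{-O(k^2/n)}$ form that your argument supplies.
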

When proving Theorem 
1 (Section \mbox{\ref{sec:earlystage}}) we use coupling methods (\citep{andersson_stochastic_2000,ball_strong_1995}) to show that during any finite time period, the epidemic described in terms of to-be-reported components converges to the branching process of to-be-reported components. Having done this it remains to derive properties of such a limiting to-be-reported component. It turns out that such a to-be-reported component can be described by a jump Markov chain having births (increased by 1), deaths (decreased by one) and killing (the whole component being removed), all occurring with linear rates. Suppose that there are currently $k$ alive individuals in the component, then each such individual gives birth to a new to-be-reported individual at rate $\beta p$ and thus the total birth rate is $k\beta p$. Each individual dies naturally at rate $\gamma$ so the overall death rate equals $k\gamma$. Finally, the whole component is removed as soon as one of the $k$ alive individuals is removed, so this happens at rate $k \delta.$ Until the component is removed, it generates index cases to new independent to-be-reported components at rate $k\beta (1-p).$ This describes the evolution of the to-be-reported components.
Viewed as a branching process, the most interesting quantity is the distribution of the number of offspring $Z$ (= roots of new to-be-reported components) that one to-be-reported component produces before being removed. The mean offspring distribution, corresponding to the reproduction number of the components in the epidemic setting, is then given by
\begin{equation}
\label{eq:defRc}
  R^{(c)}_{*}=\mathbb{E}[Z].
\end{equation}
By considering the jump Markov chain we can write the total offspring $Z$ as a sum
\begin{equation}
\label{eq:expressionZ}
    Z=\sum_{i=1}^{N_{C}}X_{i},
\end{equation}
where $N_{C}$ denotes the number of jumps the Markov process makes until it is removed, and $X_{i}$ denotes the number of newly generated roots of components between the $(i-1)$-th and $i$-th jump. Because all three jumps the process can make (birth, death and removal) happen at linear rates, the current number of alive individuals only affects the speed of the process but not which jump it makes. As a direct consequence, the components $X_1, X_2, \dots$ are not only independent but also identically distributed: $X_i\sim X$. It hence follows that
\begin{equation}
\label{eq:expressionRc}
    R^{(c)}_{*}=\mathbb{E}[N_{C}]\mathbb{E}[X].
\end{equation}
In Section~\ref{sec:branchingprocess} we show that
\begin{equation}
\label{eq:expressionEX}
    \mathbb{E}[X]=\frac{\beta(1-p)}{\beta p+\gamma+\delta},
\end{equation}
and
\begin{equation}
\label{eq:expressionENc}
        \mathbb{E}[N_{C}] =1+\sum_{k=1}^{\infty}\mathbf{P}(N_{c} > k),
\end{equation}
where
\begin{equation}
\label{eq:expressionPNc}
\begin{split}
     \mathbf{P}(N_{C} > k)&= \bigg(1-\sum_{j=1}^{\left \lceil{k/2}\right \rceil }\frac{1}{2j-1}\binom{2j-1}{j}{\bigg(\frac{\beta p}{\gamma+\beta p}\bigg)}^{j-1}{\bigg(\frac{\gamma}{\gamma +\beta p}\bigg)}^{j} \bigg)\\
 & \cdot \bigg(\frac{\beta p+\gamma}{\beta p+\gamma+\delta}\bigg)^{k}.
\end{split}
\end{equation}
The reproduction number defined above was for the to-be-reported components (the average number of new components it produces before being removed, i.e.\ completely diagnosed or die out undetected). Even though the original limit process is not a branching process, it is possible to determine the effective reproduction number $R^{(ind)}_{*}$ for it. Its interpretation is easier: it equals the average number of individuals a typical infected infects during the early stage of the epidemic.

In Section~\ref{sec:branchingprocess} we derive the following relation between the two reproduction numbers.
\begin{equation}
\label{eq:expressionRind}
    R^{(ind)}_{*}= \frac{\mu_c-1 + R^{(c)}_{*}}{\mu_{c}} = 1-\frac{1}{\mu_c}+\frac{R^{(c)}_{*}}{\mu_{c}},
\end{equation}
with 
\begin{equation}
\label{eq:expression_mu_c}
    \mu_{c}=1+\frac{\beta p}{\beta p+ \gamma} \mathbb{E}[N_{c}]
\end{equation}
the expected number of born individuals before the component is removed. 

\begin{myrem}
It is easily observed that $R^{(ind)}_{*}<1$ if and only if $R^{(c)}_{*} < 1$, and similarly for ''='' and ''$>$''. The limit process is hence sub-critical (i.e.\ will die out with probability 1), when $R^{(c)}_{*} < 1$ and super-critical if $R^{(c)}_{*} > 1$ (so will grow beyond all limits with positive probability). The same holds true if $R^{(c)}_{*}$ is replaced by $R^{(ind)}_{*}$. This indicates the following corollary. 
\end{myrem}

\begin{myrem}
In Section \ref{sec:numerical} the two reproduction numbers are computed numerically for different parameter values. Surprisingly, the component reproduction number $R^{(c)}_{*}$ turns out not to be monotonically decreasing in the tracing probability $p$. However, the individual reproduction number seems to be decaying in $p$ as expected. We have failed in producing a formal proof of this result.
\end{myrem}
\begin{mycol}
\label{corollary:majoroutbreak}
Let $Z_n$ denote the final number, and $\bar Z_n=Z_n/n$ the final fraction, that get infected during the entire epidemic. If $R^{(ind)}_{*} \le 1$ it then follows that $\bar Z_n \overset{p}{\to} 0$ namely there will be a minor outbreak for sure. If $R^{(ind)}_{*}> 1,$ then $Z_n\to\infty$ with probability $1-\pi$ where 
$\pi$ is the smallest solution on $[0,1]$ of the equation
\begin{equation}
\label{eq:generatingfct}
    s=\rho_{Z}(s)=\rho_{N_{c}}(\rho_{X}(s)),
\end{equation}
where $\rho_{Z},$ $\rho_{N_{c}}$ and $\rho_{X}$ 
are the probability generating functions of $ Z, N_{C} $ and $X$, respectively.
\end{mycol}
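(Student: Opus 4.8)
The plan is to reduce the statement to standard Galton--Watson theory, using Theorem~\ref{theorem:early_approx} to transfer the behaviour of the limit process to the finite-$n$ epidemic. When the limit process is described through its to-be-reported components it is a branching process whose offspring variable is exactly the $Z$ of \eqref{eq:expressionZ}, with mean $R^{(c)}_*=\mathbb{E}[Z]$. The first ingredient is the generating-function identity $\rho_Z(s)=\rho_{N_c}(\rho_X(s))$: since $Z=\sum_{i=1}^{N_C}X_i$ is a random sum of i.i.d.\ copies of $X$ that are independent of $N_C$, conditioning on $N_C$ and using $\mathbb{E}[s^{X_1+\cdots+X_m}]=\rho_X(s)^m$ yields $\rho_Z(s)=\mathbb{E}[\rho_X(s)^{N_C}]=\rho_{N_c}(\rho_X(s))$, which is the composition in \eqref{eq:generatingfct}.

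Next I would treat the subcritical/critical case $R^{(ind)}_*\le 1$. By the Remark this is equivalent to $R^{(c)}_*\le 1$, so the branching process of components is (sub)critical and goes extinct almost surely with a.s.\ finite total progeny; since each component a.s.\ contains finitely many individuals (its jump chain is removed in finite time a.s., because removal occurs at a strictly positive per-jump probability $\delta/(\beta p+\gamma+\delta)$), the total number of ever-infected individuals in the limit is a.s.\ finite. Coupling the epidemic and the branching process of components, as in \citep{ball_strong_1995,andersson_stochastic_2000}, the two agree until the first ``ghost'' contact with an already-infected individual, whose probability over the first $k$ infections is $O(k^2/n)$. Because the limiting total progeny is a.s.\ finite, for every $\varepsilon>0$ the probability that the epidemic produces more than $\varepsilon n$ infections tends to $0$, giving $\bar Z_n\overset{p}{\to}0$.

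For the supercritical case $R^{(ind)}_*>1$, equivalently $R^{(c)}_*>1$, standard branching-process theory identifies the extinction probability of the component process as the smallest root $\pi\in[0,1)$ of $s=\rho_Z(s)$, i.e.\ of \eqref{eq:generatingfct}. On the extinction event (probability $\pi$) the coupling above again yields a minor outbreak, while on the survival event (probability $1-\pi$) the branching process grows without bound. To turn survival into epidemic take-off I would use a two-sided coupling that sandwiches the epidemic between branching processes: as long as fewer than $\eta n$ individuals have ever been infected the per-capita infection rate $\beta S_n(t)/n$ stays above $\beta(1-\eta)$, so a lower-bounding branching process, supercritical for small $\eta$ by continuity of $R^{(c)}_*$ in $\beta$, survives with probability close to $1-\pi$ and forces $Z_n\ge\eta n\to\infty$.

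The main obstacle is exactly this last transfer: Theorem~\ref{theorem:early_approx} controls the process only on a fixed interval $[0,t_0]$, whereas $Z_n$ depends on the entire epidemic. The delicate point is to show that branching-process survival and epidemic take-off share the same limiting probability $1-\pi$, which requires the sandwiching coupling to hold uniformly until a macroscopic fraction is infected, together with a verification that the lower-bounding process can be made to survive with probability arbitrarily close to $1-\pi$ as $\eta\downarrow0$. This is the standard but technically demanding core of epidemic threshold theorems, and is where the argument must be made careful.
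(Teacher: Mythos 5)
Your proposal follows essentially the same route as the paper: describe the limit process via to-be-reported components so that it becomes a Galton--Watson process with offspring variable $Z=\sum_{i=1}^{N_C}X_i$, obtain $\rho_Z(s)=\rho_{N_{c}}(\rho_X(s))$ by conditioning on $N_C$ and using independence, and identify $\pi$ as the smallest root of $s=\rho_Z(s)$ on $[0,1]$. If anything you are more careful than the paper, whose proof treats the transfer from branching-process extinction/survival to minor/major outbreak in the finite-$n$ epidemic heuristically, whereas you make explicit the ghost-coupling bound for the subcritical case and the two-sided sandwiching coupling needed to turn survival into macroscopic take-off.
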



In the last part of Section~\ref{sec:branchingprocess}, we give special attention to the case where there is no natural recovery ($\gamma=0$) which accordingly can be called the \textbf{SI-TT} model. In this situation, the expressions become simpler and are given in the following corollary.

\begin{mycol}
\label{corollary:SI-TT}
In the SI-TT model having $\gamma=0$, the component reproduction number is given by \begin{equation}
\label{eq:Rc_SID}
    R^{(c)}_{*,SI-TT}=\frac{\beta(1-p)}{\delta},
\end{equation}
the individual reproduction number equals \begin{equation}
\label{eq:Rind_SID}
    R^{(ind)}_{*,SI-TT}=\frac{\beta}{\beta p+\delta},
\end{equation}
and the minor outbreak probability becomes
\begin{equation}
\label{eq:pi_SID}
    \pi=\frac{1}{R^{(c)}_{*,SI-TT}}=\frac{\delta}{\beta(1-p)}.
\end{equation}
\end{mycol}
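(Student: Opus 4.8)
The plan is to obtain all three identities by specialising the general expressions of Section~\ref{sec:branchingprocess} to $\gamma=0$; the only genuinely new phenomenon is how drastically the offspring law simplifies once natural recovery is switched off. First I would treat $R^{(c)}_{*,SI-TT}$. Setting $\gamma=0$ in \eqref{eq:expressionEX} gives $\mathbb{E}[X]=\beta(1-p)/(\beta p+\delta)$ at once. The key observation is that every term of the inner sum in \eqref{eq:expressionPNc} carries a factor $(\gamma/(\gamma+\beta p))^{j}$ with $j\ge 1$, which vanishes at $\gamma=0$; hence the bracket collapses to $1$ and $\mathbf{P}(N_{C}>k)=(\beta p/(\beta p+\delta))^{k}$. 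Probabilistically this just records that, without deaths, a to-be-reported component can never go extinct on its own, so its embedded jump chain is a sequence of Bernoulli trials ending only at the first removal, which makes $N_{C}$ geometric. Summing the resulting geometric series in \eqref{eq:expressionENc} gives $\mathbb{E}[N_{C}]=(\beta p+\delta)/\delta$, and then \eqref{eq:expressionRc} yields $R^{(c)}_{*,SI-TT}=\mathbb{E}[N_{C}]\,\mathbb{E}[X]=\beta(1-p)/\delta$.

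Next I would specialise $\mu_{c}$ and feed it into \eqref{eq:expressionRind}. Because there are no death jumps when $\gamma=0$, every jump before the terminal removal is a birth, so a component consists of exactly $N_{C}$ individuals; hence $\mu_{c}$, the expected component size in \eqref{eq:expression_mu_c}, equals $\mathbb{E}[N_{C}]=(\beta p+\delta)/\delta$. The numerator of \eqref{eq:expressionRind} then simplifies, $\mu_{c}-1+R^{(c)}_{*,SI-TT}=\beta p/\delta+\beta(1-p)/\delta=\beta/\delta$, and dividing by $\mu_{c}=(\beta p+\delta)/\delta$ produces the claimed $R^{(ind)}_{*,SI-TT}=\beta/(\beta p+\delta)$.

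For the minor-outbreak probability I would work with the generating-function equation \eqref{eq:generatingfct}. The crucial point is that at $\gamma=0$ not only $N_{C}$ but also $X$ is geometric: in the competition between the root-generating clock (rate $\beta(1-p)$ per individual) and the next jump (rate $\beta p+\delta$ per individual) one finds $\rho_{X}(s)=(1-a)/(1-as)$ with $a=\beta(1-p)/(\beta+\delta)$, while $\rho_{N_{C}}(t)=(1-q)t/(1-qt)$ with $q=\beta p/(\beta p+\delta)$. Composing these M\"obius maps as in \eqref{eq:generatingfct} and simplifying, $\rho_{Z}$ is again of M\"obius form $\rho_{Z}(s)=(1-b)/(1-bs)$, so $Z$ is itself geometric, with mean $R^{(c)}_{*,SI-TT}=\mathbb{E}[Z]$ by \eqref{eq:defRc}. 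The fixed-point equation $s=(1-b)/(1-bs)$ is then the quadratic $bs^{2}-s+(1-b)=0$, which factors as $(s-1)(bs-(1-b))=0$; its roots are $1$ and $(1-b)/b=1/R^{(c)}_{*,SI-TT}$, so in the supercritical regime the smallest solution in $[0,1]$ is $\pi=1/R^{(c)}_{*,SI-TT}=\delta/(\beta(1-p))$.

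The algebra above is routine; the one step deserving care --- and the reason the clean identity $\pi=1/R^{(c)}_{*}$ holds here but not for general $\gamma$ --- is the verification that the compound sum $Z$ is \emph{exactly} geometric. I therefore expect the main work to be the generating-function composition together with the check that its mean recovers $R^{(c)}_{*,SI-TT}$, while also keeping the two roles of $N_{C}$ (counting jumps versus counting individuals) straight when specialising $\mu_{c}$.
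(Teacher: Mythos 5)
Your argument is correct and every formula checks out, but your route differs from the paper's in two respects worth noting. For $R^{(c)}_{*,SI-TT}$ and $\pi$ the paper does not specialise the general formulas at all: it observes that with $\gamma=0$ a component can only die by diagnosis, so one may disregard the internal births (which change the common factor $k$ in all rates but not the relative probabilities) and view the evolution as a pure competition between root-generation (rate $k\beta(1-p)$) and diagnosis (rate $k\delta$); hence $Z_{SI-TT}$ is geometric with parameter $\delta/(\delta+\beta(1-p))$ \emph{directly}, and \eqref{eq:Rc_SID} and \eqref{eq:pi_SID} follow at once from the mean and the fixed-point equation of a geometric generating function. You reach the same geometric law only a posteriori, by specialising $\mathbb{E}[X]$, $\mathbf{P}(N_C>k)$ and $\mathbb{E}[N_C]$ and then composing the two M\"obius-type generating functions in \eqref{eq:generatingfct}; this is more computational but has the side benefit of verifying that the general machinery of Section~\ref{sec:branchingprocess} does collapse to the stated special case at $\gamma=0$. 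Second, for $\mu_c$ you argue directly that every non-terminal jump is a birth, so the component contains exactly $N_C$ individuals and $\mu_c=\mathbb{E}[N_C]=(\beta p+\delta)/\delta$; this agrees with the paper's own $\mu_{c,SI-TT}$ and yields \eqref{eq:Rind_SID}. Your closing caution about keeping the two roles of $N_C$ straight is well placed: a mechanical substitution of $\gamma=0$ into \eqref{eq:expression_mu_c} would instead give $1+\mathbb{E}[N_C]=1+(\beta p+\delta)/\delta$, one unit too large, which suggests the prefactor in \eqref{eq:expression_mu_c} should read $\beta p/(\beta p+\gamma+\delta)$ rather than $\beta p/(\beta p+\gamma)$ (the probability that a given jump is a birth must be taken among all three event types, diagnosis included). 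Your direct count sidesteps this pitfall and is the safer derivation.
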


\begin{myrem}
Again in this case, we see from Equation (\ref{eq:Rc_SID}) and (\ref{eq:Rind_SID}) that $ R^{(c)}_{*,SI-TT}$ is smaller than or equal to or larger than 1, if and only if $ R^{(ind)}_{*,SI-TT}$  is smaller than or equal to or larger than 1, respectively.
\end{myrem}

We now switch attention to the main phase of the epidemic rather than its beginning (the corresponding proofs are given in Section~\ref{sec:mainphase}). In order to surpass the initial phase of the epidemic we therefore assume a small initial \emph{fraction} $\varepsilon >0$ of infectives (instead of only one initial infective). Further, we assume that contact tracing only takes place for the contacts 
resulting in infection, not for the contacts between infectious individuals and individuals who have been infected.

We start by introducing notations for the epidemic and its limiting process, where we keep track of the fraction of susceptibles as well as the fractions of infectives belonging to to-be-reported components with each given number of infectives.

For $t\geq 0$ and $\varepsilon>0,$ let $S^{(n)}(t)$ denote the number of susceptible individuals with initial value $S^{(n)}(0)=(1-\varepsilon)n.$ For $j=1,2,...,n,$ let $I^{(n)}_j(t)$ be the number of infectious individuals that belong to a to-be-reported component containing $j$ infectives, and
$I^{(n)}(t)=\sum_{j=1}^{n}I^{(n)}_j(t)$
denotes the total number of infectious individuals at time $t$,
with initial values
$I^{(n)}(0)=I^{(n)}_1(0)=\varepsilon n,$
and
$I^{(n)}_2(0) =I^{(n)}_3 (0)=\cdots= 0.$
Let $R^{(n)}(t)$ 
denote the number of individuals who stop being infectious including both naturally recovered and diagnosed,
with initial value
$R^{(n)}(0)=0.$
Since it always hold that $S^{(n)}(t)+I^{(n)}(t)+R^{(n)}(t)=n,$ we eliminate $R^{(n)}$ from our analysis. Further, let $E^{(n)}=\{E^{(n)}(t); t \geq 0\}=\{(S^{(n)}(t)/n, I^{(n)}_1(t)/n, I^{(n)}_2(t)/n,...,I^{(n)}_n(t)/n\}$ be the stochastic epidemic density process which becomes infinite-dimensional, as the population size $n$ goes to infinity.

The limiting deterministic process denoted by $E^{\infty}= \{E^{\infty}(t); t\ge 0\}=\{s(t),i_{1}(t),i_{2}(t),\cdots\}$ is obtained by considering the jumps that the components make. An infection in a $j$-component moves the component to a $(j+1)$-component implying that $S$ is reduced by 1, $I_j$ reduced by $j$ and $I_{j+1}$ increased by $j+1$. A natural recovery in such a component increases $R$ by 1, decreases $I_j$ by $j$ and increases $I_{j-1}$ by $j-1$. Finally, a test-detection in such a component reduces $I_j$ by $j$ and increases $R$ by $j$.

In Section~\ref{sec:mainphase}, we prove of the following theorem.

\begin{mythm}

\label{theorem:mainphase}
For $t \geq 0,$ let $s(t)$ be the community fraction of susceptibles, $i_j(t)$ be
the fraction of infectious individuals belonging to a to-be-reported component containing $j$ infectives, and $i(t)=\sum_{i=1}^{\infty}i_{j}(t),$ be the community fraction of infectives.

Further, we set   
\begin{equation}
\label{eq:diff_s}
    s'(t)=-\beta s(t)i(t),
\end{equation}
\begin{equation}
\label{eq:diff_i1}
{i_1}'(t)=\beta(1-p)i(t)s(t)+\gamma i_{2}(t)-\beta p i_{1}(t)s(t)-(\gamma +\delta)i_{1}(t),
\end{equation}
for $j\geq 2,$
\begin{equation}
\label{eq:diff_ij}
{i_j}'(t)= \beta {p}j i_{j-1}(t)s(t)+\gamma j i_{j+1}(t)-\beta p j i_{j}(t)s(t)-(\gamma+\delta) j i_{j}(t),
\end{equation}
with the corresponding initial configuration
\begin{equation}
\label{eq:initial_s}
    s(0)=1-\varepsilon,
\end{equation}
\begin{equation}
\label{eq:initial_i1}
     i(0)=i_1(0)=\varepsilon,
\end{equation}
 for $j\geq 2,$
\begin{equation}
\label{eq:initial_ij}
    i_j(0)=0.
\end{equation}

Then the infinite-dimensional stochastic epidemic process $E^{(n)}$ converges to the deterministic process $E^{\infty}$ defined by Equation (\ref{eq:diff_s})-(\ref{eq:initial_ij}) as $n\rightarrow\infty$, on any finite time interval $[0, t_{end}]$.
\end{mythm}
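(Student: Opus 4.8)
The plan is to regard $E^{(n)}$ as a density-dependent Markov jump process and establish a functional law of large numbers, identifying its limit with the unique solution of \eqref{eq:diff_s}--\eqref{eq:diff_ij} under the initial data \eqref{eq:initial_s}--\eqref{eq:initial_ij}. First I would write each coordinate through a random time-change representation driven by independent unit-rate Poisson processes, one per jump type and component size. With the convention of the main phase (tracing acts only on infecting contacts), a $j$-component undergoes: a reported infection at rate $\beta p\, j\, (S^{(n)}/n)$ carrying $j\mapsto j+1$; an unreported infection at rate $\beta(1-p)\, j\,(S^{(n)}/n)$ that spawns a fresh $1$-component; a natural recovery at rate $\gamma j$ carrying $j\mapsto j-1$; and a test-detection at rate $\delta j$ that removes the whole component. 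Summing the per-component contributions of each jump over all components and dividing by $n$ reproduces exactly the right-hand sides of \eqref{eq:diff_s}--\eqref{eq:diff_ij}; hence, writing $F_j$ for that right-hand side, the scaled process obeys $E^{(n)}_j(t)=E^{(n)}_j(0)+\int_0^t F_j\big(E^{(n)}(u)\big)\,du+M^{(n)}_j(t)$ with $M^{(n)}_j$ a martingale whose predictable quadratic variation is of order $n^{-1}$, up to moments of the component-size distribution that are controlled in the next step.

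Second, I would control the martingale part via Doob's $L^2$-inequality: the quadratic variation involves the moments $\sum_j j^2(\cdot)$ of the component sizes, so $\sup_{t\le t_{end}}|M^{(n)}_j(t)|\to 0$ will follow once these moments are bounded uniformly in $n$. I would obtain such a bound from a differential inequality for the weighted sum $\phi^{(n)}(t)=\sum_j j\, i^{(n)}_j(t)$, which is (up to the factor $n$) the second moment of the component-size distribution: accounting for the four jump types, the drift of $\phi^{(n)}$ is bounded above by an affine function of $\phi^{(n)}$ itself, since the size-increasing infections contribute at most $\beta p(2\phi^{(n)}+1)$ while recovery and detection only decrease it. Gronwall's inequality then yields a bound on $\mathbb{E}\big[\sup_{t\le t_{end}}\phi^{(n)}(t)\big]$ uniform in $n$; this simultaneously bounds the martingale quadratic variations and justifies truncating the system at a large component size $K$ with an error that is uniformly small.

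Third, I would deduce relative compactness of $\{E^{(n)}\}$ in the Skorokhod space $D([0,t_{end}],\mathcal{X})$, where $\mathcal{X}$ is the subset of $\mathbb{R}\times\ell^1$ on which $s,i_j\ge 0$ and $s+\sum_j i_j\le 1$; coordinatewise tightness follows from the moment bound of the previous step together with Aldous' criterion, and the uniform tail control upgrades it to tightness in $\mathcal{X}$. By the decomposition above and the vanishing of the martingale part, every weak limit point satisfies the integrated form of \eqref{eq:diff_s}--\eqref{eq:diff_ij}. It then remains to show this infinite system has a \emph{unique} solution in $\mathcal{X}$: since its nonlinearity is bilinear and the coupling between neighbouring coordinates is Lipschitz on the bounded set $\mathcal{X}$, a Gronwall / Picard contraction argument in a suitably weighted $\ell^1$-norm gives existence and uniqueness. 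Uniqueness identifies the single limit point, so tightness upgrades to convergence in distribution, and since the limit is deterministic this is convergence in probability.

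The step I expect to be the main obstacle is the second one, namely controlling arbitrarily large to-be-reported components. Unlike the classical finite-dimensional law of large numbers for SIR epidemics, here the jump rates are unbounded in the component size $j$, so neither the martingale quadratic variations nor the Lipschitz constant of $F$ is bounded uniformly over $\mathcal{X}$. The crux is therefore the propagation-of-moments estimate showing that, on any finite horizon, the component-size distribution keeps a uniformly light tail, so that the truncation at size $K$, the finite-dimensional density-dependent convergence theorem (as in \citep{andersson_stochastic_2000}), and the limit $K\to\infty$ combine to give the result. Once this estimate is in place, the remaining tightness, identification and Gronwall steps are routine.
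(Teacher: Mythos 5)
Your overall architecture matches the paper's: both treat $E^{(n)}$ as a density-dependent Markov population process, truncate at a maximal component size $K$ to obtain a finite-dimensional system to which the standard functional law of large numbers (Theorem 5.2 of \citep{andersson_stochastic_2000}) applies, and then argue that the mass in components of size exceeding $K$ is uniformly negligible so that $K\to\infty$ recovers the infinite system. The genuine difference lies in how that last, crucial tail estimate is obtained. The paper's argument is a stochastic domination: the SIR-TT component process is dominated by the SI-TT process ($\gamma=0$), whose maximal clump size is geometric with parameter $\delta/(\beta p+\delta)$, giving an exponentially small tail in $K$; this is slick but only sketched ("we omit the details"). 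You instead propagate moments of the component-size distribution via a Gronwall inequality on $\phi^{(n)}(t)=\sum_j j\,i^{(n)}_j(t)$, which yields only an $O(1/K)$ tail bound but is self-contained and, combined with your martingale/time-change decomposition, tightness, and uniqueness argument for the infinite ODE system, actually supplies details the paper leaves implicit (notably well-posedness of the infinite-dimensional limit and identification of the limit point). Two small imprecisions in your write-up: the quadratic variation of $M^{(n)}_j$ involves jumps of size $j/n$ and rates proportional to $j$, so it is controlled by $\sum_j j^2 i_j=\sum_j j^3 c_j$ (a third moment of the component-count distribution), one order higher than your $\phi^{(n)}$ — the same affine-drift/Gronwall argument does propagate this higher moment, but you should run it at that order (or note that for the truncated system the jumps are bounded by $K/n$ and no moment propagation is needed there); and the unreported infections also increase $\phi^{(n)}$ by spawning fresh $1$-components, contributing an extra additive constant $\beta(1-p)$ to the drift bound, which does not affect the conclusion. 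Neither point is a gap; your route buys rigour at the cost of a weaker (polynomial rather than exponential) tail estimate, which is still sufficient.
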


When proving the theorem above, we first truncate both systems such that there is a maximal component size $K$ making the processes finite dimensional, for which theory of population processes gives convergence. Then we argue that the component sizes for the original processes will be exponentially small in maximal component size, thus making the truncated models good approximations of the original processes.

If the SIR-TT model was started with one initial infective the time it takes until a fraction $\varepsilon$, i.e.\ a number $n \varepsilon$,  have been infected, tends to infinity. For this reason this initial condition does not converge to the deterministic process above. Similarly, the end of the epidemic where the final small fraction $\epsilon$ gets infected also takes longer and longer time the larger $n$ is. However, like in many similar but simpler epidemic models we expect that, when it comes to the final number getting infected, the start of the epidemic determines if there is a major outbreak or not, and end of the epidemic has negligible effect. We formulate this more precisely in the following conjecture.

\begin{mycon}\label{conjecture}
Consider the SIR-TT epidemic starting with one initially infective. The final fraction infected $\bar Z_n$ converges to a two point distribution $\zeta,$ where $\zeta=0$ (minor outbreak) happens with probability $\pi$, and with probability $1-\pi$, $\zeta = r_\infty= \lim_{\varepsilon\to 0}\lim_{t\to \infty} r(t)$ (major outbreak), where $\pi$ is defined in Corollary \ref{corollary:majoroutbreak} and $r(t)=1-s(t)-i(t)$ in Theorem \ref{theorem:mainphase}.
\end{mycon}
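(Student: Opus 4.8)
The plan is to decompose the argument into an early-phase analysis, where the branching approximation of Theorem~\ref{theorem:early_approx} decides whether the outbreak is minor or major, and a main-phase analysis, where the deterministic limit of Theorem~\ref{theorem:mainphase} pins down the final size on the major-outbreak event. First I would handle the subcritical and critical case $R^{(ind)}_{*} \le 1$ (equivalently $R^{(c)}_{*} \le 1$, as noted earlier): by Theorem~\ref{theorem:early_approx} the to-be-reported components converge to a branching process that is subcritical or critical and hence dies out almost surely after producing only finitely many components. Coupling this branching process to the epidemic over a growing but $o(n)$ time window then forces the total number infected to be $o_{p}(n)$, so $\bar Z_{n} \overset{p}{\to} 0$. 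This already yields $\zeta = 0$ whenever $R^{(ind)}_{*}\le 1$ (where $\pi = 1$), and it also supplies the minor-outbreak contribution when $R^{(ind)}_{*} > 1$: by Corollary~\ref{corollary:majoroutbreak} the limiting branching process goes extinct with probability exactly $\pi$, and on that event the same coupling gives $\bar Z_{n}\overset{p}{\to}0$.

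Second, I would treat the major-outbreak event (probability $1-\pi$ when $R^{(ind)}_{*} > 1$), on which the branching process survives. The crucial bridging step is to introduce, for fixed small $\varepsilon > 0$, the stopping time $\tau_{\varepsilon}^{(n)}$ at which a fraction $\varepsilon$ of the population has ever been infected. On the survival event one shows that $\tau_{\varepsilon}^{(n)} < \infty$ with probability tending to $1-\pi$, and that at this time the rescaled configuration of infectives-by-component-size is close to the initial configuration $(s,i_{1},i_{2},\dots)=(1-\varepsilon,\varepsilon,0,\dots)$ of Theorem~\ref{theorem:mainphase}. Restarting the process at $\tau_{\varepsilon}^{(n)}$ and using the strong Markov property, Theorem~\ref{theorem:mainphase} then gives that on any finite window the density process tracks the deterministic solution $E^{\infty}$. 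Running that solution to $t\to\infty$ produces a final recovered fraction $r_{\varepsilon} := \lim_{t\to\infty} r(t)$, and the remaining step is to show $r_{\varepsilon}\to r_{\infty}$ as $\varepsilon\to 0$, which needs continuity of the deterministic final size in the initial fraction together with existence of the iterated limit in the statement.

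Third, I would control the two ends of the epidemic not covered by a single finite window. On the initial side, the individuals infected before $\tau_{\varepsilon}^{(n)}$ number at most $\varepsilon n$ by construction, so they contribute negligibly as $\varepsilon\to 0$. On the terminal side, the deterministic $i(t)$ decays to $0$, and one must check that the last $o(n)$ infectives die out without appreciably changing the count; a standard device is a sandwich argument, bounding the epidemic's final size (with high probability) between the deterministic final sizes started from initial fractions $\varepsilon$ and, say, $2\varepsilon$, and then letting $\varepsilon\to 0$.

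The hard part, and the reason the statement is phrased as a conjecture, is the interchange of the limits $n\to\infty$ and $t\to\infty$. Theorem~\ref{theorem:mainphase} supplies convergence to $E^{\infty}$ only on \emph{finite} intervals $[0,t_{end}]$, whereas completing the epidemic requires $t\to\infty$; making the hand-off at $\tau_{\varepsilon}^{(n)}$ rigorous demands uniform (in $n$) control of the terminal tail. This is genuinely delicate here because the contact-tracing mechanism makes the to-be-reported components dependent in the finite population (they are only asymptotically independent), so the Sellke-type or susceptibility-set constructions that normally decouple the final size do not transfer directly. Establishing the required uniformity — equivalently, ruling out a non-negligible fraction being infected during a time window whose length must grow with $n$ — is where the main effort, and the gap preventing a full proof, would lie.
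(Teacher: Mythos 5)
First, be aware that the paper offers \emph{no} proof of this statement: it is stated as a conjecture precisely because the authors could not close the argument, and they say so explicitly in the discussion (``a proof of this is missing''). So there is nothing in the paper to compare your proposal against step by step; what you have written is a proof programme, and you yourself flag that its central step is open. Your outline follows the standard Ball--Donnelly-type route (the branching approximation of Theorem~1 decides minor versus major outbreak; a stopping time at which a fraction $\varepsilon$ of the population has been infected hands the process over to the deterministic limit of Theorem~2; the two ends of the epidemic are controlled separately and $\varepsilon\to 0$ is taken last), and it correctly locates the main obstruction: Theorem~2 gives convergence only on finite time intervals, and the contact-tracing mechanism destroys the independence between individuals (hence the Sellke-type and susceptibility-set constructions) that usually allow uniform-in-$n$ control of the final size. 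As a diagnosis of why the statement remains a conjecture, this is accurate.

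However, even as a programme your second step contains an unacknowledged gap. At the stopping time $\tau^{(n)}_{\varepsilon}$ the configuration of the epidemic is \emph{not} close to the initial condition $(1-\varepsilon,\varepsilon,0,\dots)$ of Theorem~2: by that time a positive proportion of the $\varepsilon n$ ever-infected individuals have already recovered or been diagnosed, and the remaining infectives are spread over to-be-reported components of all sizes according to the stable type distribution of the supercritical branching process, not concentrated in singleton components with no removals. You therefore cannot restart Theorem~2 from its stated initial condition; you would need (i) a version of Theorem~2 valid for general initial configurations $(1-\varepsilon,i_1,i_2,\dots)$ with $\sum_j i_j\le\varepsilon$, and (ii) a proof that the deterministic final size is asymptotically insensitive to which such configuration one starts from as $\varepsilon\to 0$, so that the iterated limit $r_\infty=\lim_{\varepsilon\to 0}\lim_{t\to\infty}r(t)$ is indeed the correct constant. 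Combined with the limit-interchange and uniformity issues you already identify, this means the proposal is a sensible plan but not a proof, which is consistent with the statement's status in the paper as an open conjecture.
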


\begin{myrem}\label{remark_CLT}
 In Section~\ref{sec:numerical}, we show several simulations in support of Conjecture \ref{conjecture} and also indicating that the distribution of $\bar Z_n$ appears to satisfy a central limit theorem concentrated around the deterministic limit $r_\infty$.  
\end{myrem}

Finally in Section~\ref{sec:numerical} we perform simulations and numerical illustrations confirming our results and investigating the effect of Test-and-Trace strategy for parameter values inspired from the Covid-19 pandemic.

\section{Proof of Theorem 1}
\label{sec:earlystage}

In this section, we aim to approximate the early stages of the epidemic using large population approximations.
We first denote the sequence of our epidemic processes with one initial infective by $\{E_{n}(\beta,\gamma,\delta,p),n\geq 1\},$ where we recall that $\beta$ is rate of infection, the rate of natural recovery is $\gamma$, $\delta$ denotes the testing (diagnosis) rate and the probability of a contact being reported equals $p$. 

Then we describe the limiting process denoted by $E(\beta,\gamma,\delta,p).$ At time $t=0,$ there is only one initial ancestor. Each individual gives birth at rate $\beta$ during their lifetimes, dies naturally (naturally recovered) with rate $\gamma$ and is removed (diagnosed) with rate $\delta.$ Once removed, each of its descendants and its parent is said to be reported and immediately removed independently with probability $p.$ Meanwhile, every parent and offspring of those who are removed, will be removed independently with probability $p$ as well and so on. In particular, if a to-be-reported individual has been already died naturally, its alive to-be-reported offspring (or parent) will also be removed. Moreover, we notice that each not to-be-reported offspring becomes a new ancestor which independently produces a process in the same pattern.
Finally we show the proof of Theorem~\ref{theorem:early_approx}. 
\begin{proof}[Proof of Theorem~{\upshape\ref{theorem:early_approx}}]
First of all, it is worth noting that the two processes $E_{n}(\beta,\gamma,\delta,p)$ and $E(\beta,\gamma,\delta,p)$ behave the same way besides one slight difference. An infection occurs in the epidemic whenever a birth occurs in the branching process, whereas an infection is "effective" only if an susceptible gets infected. And in the $n$-th epidemic, the probability that an infective gives new infections to susceptibles is $S_{n}(t)/(n-1)$ $ (\approx S_{n}(t)/n$ when n large), where $S_{n}(t)$ is the number of susceptibles at time $t.$ So, an infective infects new individuals at rate ${\beta S_{n}(t)}/{n}.$ In contrast to that, an alive individual in the limiting process give birth at rate $\beta.$ However, if the size of population $n$ is large and in the beginning of epidemic we have $S_n(t)\approx n$, then we have 
${\beta S_{n}(t)}/{n} \approx \beta,$ i.e. the rate of new infection to susceptibles in $E_{n}(\beta,\gamma,\delta,p)$ is close to the birth rate in $E(\beta,\gamma,\delta,p).$ 

As compared to the early stage approximation of standard SIR epidemic \citep{andersson_stochastic_2000,ball_strong_1995}, the number of alive individuals in this limiting process $E(\beta,\gamma,\delta,p)$ behaves not like a branching process, since it is possible that several death occur at the same time and thus the jumps of this limiting process can not only be up or down by one. 

On the other hand, if the limiting process is described in terms of to-be-reported components, then it behaves like a branching process. A component starts with one newly born (infected) individual which would not report its infector and we call this individual the root of this component. During its life duration (infectious period) this individual gives birth to new individuals, some of which will be reported and others will not. Each of those not-to-be-reported individuals becomes a root of new components, whereas those who will be reported belong to the same component. Given that there are currently $k$ to-be-reported individuals in one component, then each such individual gives birth at rate $\beta$, where each newborn belongs to the same component with probability $p$ and generates a new component with probability $1-p$. Thus, each of individual in this component gives birth to new to-be-reported individuals at rate $\beta p$ and thus the total birth rate is $k\beta p.$  Each alive (infectious) individual dies naturally(naturally recovered) at rate $\gamma$. The whole component is diagnosed if and only if one of those $k$ to-be-reported individuals is diagnosed, implying that the death rate of this process of components is $k \delta.$ Until all these $k$ individuals are removed, it generates roots of new components at rate $k\beta (1-p).$ This describes the birth and death of the to-be-reported components.

By applying the coupling method (see \citep{andersson_stochastic_2000, ball_strong_1995}), we show that the epidemic process $E_{n}(\beta,\gamma,\delta,p)$ described in terms of components converges in to the branching process of to-be-reported components. A contact (infection) in the epidemic corresponds to a birth in the branching process. Obviously, the branching process and the epidemic process of components are perfectly coupled with each other up until the time $T_n$ when the first "ghost" appears, where by "ghost" we mean the newly contacted individual which has been infected in the epidemic. If we label each $i-$th contact as $c_i$, then for any time $t_{0} \geq 0,$ the event $T_n \geq t_0$ that there has been no "ghost" occur before time $t_{0}$, is equivalent to the case that all the contacts $c_1, ..., c_{[t_{0}]}$ are distinct. Using the classic birthday-problem method, we see that 
\begin{equation*}
    \mathbf{P}(T_n \geq t_0) \approx e^{-n^2/2[t_{0}]} \rightarrow 1,
\end{equation*}
as $n\rightarrow \infty.$
This completes the proof of Theorem \ref{theorem:early_approx}. 
\end{proof}
\section{Properties of the limiting branching process }
\label{sec:branchingprocess}
Now we explore the properties of the limiting branching process $E(\beta,\gamma,\delta,p)$ of to-be-reported components which can be used to approximate the epidemic during the early phase. 
\subsection{Process of the to-be-reported components in the full model}

First, we note that our reporting process can be decided in advance, and recall that we use the same reporting or not decision in both ways between each pair of individuals since at most one direction will be used. We then focus on this Markov jump process of the components having births, deaths and sudden killing of the whole component. 

We define the size $k$ of an to-be-reported component by the number of alive (infectious) individuals in the component and hence ignore the dead (recovered) individuals. A component currently having size $k$ produces new roots of new components at rate $k\beta(1-p).$ The component itself remains with size $k$ for an exponentially distributed time with rate $k (\beta p+\gamma+\delta)$, next event would be one of the three following independent cases. The first case is that a new infection occurs at rate $k \beta p$, which corresponds to increasing the size of component by one. Secondly, we note that each of the individuals in the component becomes naturally recovered at rate $\gamma$. If this happens, then the size of component would decrease by one. The remaining case is that one of the to-be-reported individuals is diagnosed and so the whole component is eliminated by diagnosis at rate $k\delta$, which means that upon this event, the size of component goes down to zero.

\begin{figure}
\centering
  \includegraphics[width=119mm]{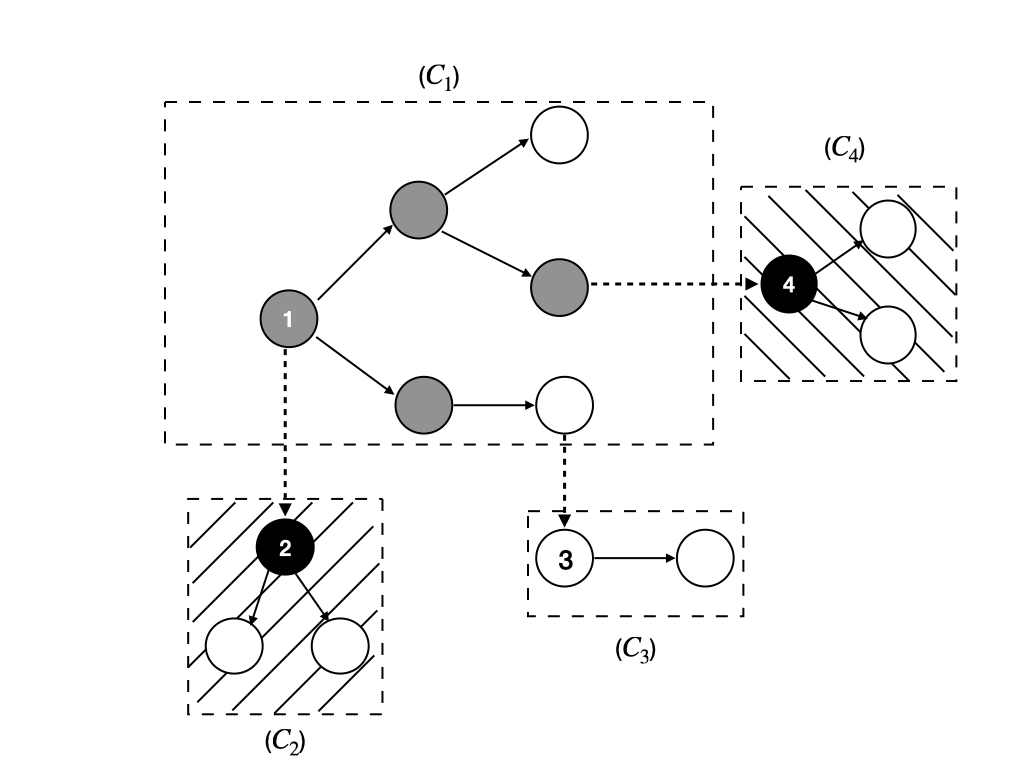}
\caption{Example of a "reporting tree": The white nodes stands for "infectious", the grey ones for "naturally recovered", whereas the black ones for "diagnosed". A directed edge from one node to another means that the latter one is infected by the previous one. Full edges reflect to-be-reported contacts (probability $p$) and the dashed ones for those not to be reported}
\label{report-tree}
\end{figure}

In Fig.~\ref{report-tree}, we show an example illustrating how a \textbf{"reporting branching tree"} of to-be-reported components grows: at first, we have a newly infected case, namely the node 1. An edge goes from one node to another node, meaning that the latter one is infected by the previous one. The dashed edge between two nodes means for the not-to-be-reported case, whereas the full edge stands for the to-be-reported case. After a certain period, there is a to-be-reported component, denoted by $C_{1},$ produced by its root  1, and three newly generated roots 2, 3 and 4, each of which produces their own to-be-reported components, denoted by $C_{2}$, $C_{3}$ and $C_{4}$ respectively. Furthermore, the white nodes stands for "infectious", the grey ones for "naturally recovered" and the black ones for "diagnosed". We can see that at this stage when the roots 2 and 4 are diagnosed, the whole components $C_{2}$ and $C_{4}$ are reported and immediately diagnosed.

Our interest is to derive the important quantity for our epidemic model, namely \textbf{the effective component reproduction number} $R^{(c)}_{*}$, which is defined as the expected number $R^{(c)}_{*}=\mathbb{E}[Z]$ of roots of new components generated by one root before its component is removed (completely diagnosed or dies out undetected).  Since we aim at examining the effect of testing and tracing, so given fixed rates $\beta$ and $\gamma,$ we consider $R^{(c)}_{*}=R^{(c)}_{*}(\delta,p)$ as a function of testing rate $\delta$ and tracing probability $p.$ Later in Section~\ref{sec:numerical}, we will show how the $R^{(c)}_{*}$ varies with the testing fraction $\delta/(\delta+\gamma)$ and tracing probability $p.$

To find the distribution of $Z$, we first discuss the number of events before the whole component is removed by computing the probability that the whole component is not removed before $k$ events. We recall that at each time of event, there would be only one of three following events occurs. A birth occurs with rate $k \beta p$, whereas the death of whole component happens with rate $k\delta$ and the size of component decreases by one with rate $k\gamma$. As a consequence the probability of giving a birth, which corresponds to increasing the component size by one, is given by 
\begin{equation*}
    \frac{k\beta p}{k\beta p+k\gamma+k\delta}=\frac{\beta p}{\beta p+\gamma+\delta},
\end{equation*}
the probability of death of the whole component equals
\begin{equation*}
    \frac{k\delta}{k\beta p+k\gamma+k\delta}=\frac{\delta}{\beta p+\gamma+\delta},
\end{equation*}
and the probability that the component size decreases by one, is given by
\begin{equation*}
    \frac{k\gamma}{k\beta p+k\gamma+k\delta}=\frac{\gamma}{\beta p+\gamma+\delta}.
\end{equation*}

A different way of describing the evolution of the component is to consider increases and decreases by one (a simple random walk!) until some time when the whole component dies simultaneously. It is worth pointing out that the random walk may reach zero by itself and hence stop before a simultaneous death. The non-symmetric simple random walk $\{S_{m}, m\geq 0\}$ on $\mathbb{Z}$ starts at 1 ($S_{0}=1$)
and for ${m}=1,2,3,\cdots,$ each jump of the random walk is independent and identically distributed with 
the jump probabilities 
\begin{equation*}
    \pi_{rw}=\mathbf{P}(S_{m}-S_{m-1}=1)=\frac{\beta p}{\gamma+\beta p},
\end{equation*}
and 
\begin{equation*}
    \mathbf{P}(S_{m}-S_{m-1}=-1)=1-\pi_{rw}=\frac{\gamma}{\gamma+\beta p}.
\end{equation*}
On top of this, each jump may result in diagnosis of the whole component (simultaneous death), and each time at which this happens with probability $\delta/(\beta p + \gamma + \delta)$. The number $N_{D}$ of events until the whole component is eliminated by diagnosis is hence geometrically distributed with parameter $\delta/(\beta p + \gamma + \delta)$.

Next we derive the probability that the random walk does not hit zero before $k$ jumps. Let 
\begin{equation*}
    N_{rw}=\inf_{m\geq 0}\{S_{m}=0\}
\end{equation*}
denote the first hitting zero time of random walk and it is clear that only odd steps can be taken in order to hit the origin. So, for $m$ even, the probability 
we have 
    $\mathbf{P}(N_{rw}=m)=0$. 
Otherwise $m=2j-1$ for $j=1,2,\cdots,$ we apply the Hitting Time Theorem in \citep{hofstad_elementary_2008}, which yields that the probability of first hitting zero at $m-$th step is given by 
\begin{equation*}
    \mathbf{P}(N_{rw}=m)=\mathbf{P}(N_{rw}=2j-1)=\frac{1}{2j-1}\mathbf{P}(S_{2j-1}=0),
\end{equation*}
where the probability of the (unrestricted) random walk equals 0 at $m-$th step is
\begin{equation*}
    \mathbf{P}(S_{2j-1}=0)=\binom{2j-1}{j}{\pi_{rw}}^{j-1}{(1-\pi_{rw})}^{j},
\end{equation*}
since in this case, the random walk must have taken $(j-1)$ up-jumps and $j$ down-steps.
We conclude that the probability of the random walk \emph{not} hitting zero before $k$ steps equals
\begin{equation*}
\begin{split}
    \mathbf{P}(N_{rw}>k)&=1-\sum_{m=1}^{k}\mathbf{P}_1(N_{rw}=m)\\
    &=1-\sum_{j=1}^{\left \lceil{k/2}\right \rceil }\mathbf{P}_1(N_{rw}=2j-1)\\
    &=1-\sum_{j=1}^{\left \lceil{k/2}\right \rceil }\frac{1}{2j-1}\binom{2j-1}{j}{\pi_{rw}}^{j-1}{(1-\pi_{rw})}^{j}.
\end{split}
\end{equation*}
Moreover, let $N_{C}$ denote the number of jumps up until the whole to-be-reported component is extinct (either from simultaneous diagnosis or all individuals having recovered naturally), i.e. 
\begin{equation*}
  N_{C}=\min\{N_{rw},N_{D}\}.
\end{equation*}
Recalling that $N_D$ is geometric distributed, the probability that the whole component has not gone extinct before $k=1, 2, \cdots ,$ events is given by
\begin{align*}
\mathbf{P}(N_{C} > k) & = \mathbf{P}(N_{rw}>k)\cdot \mathbf{P}(N_{D}>k)
\\
& = 
\bigg(1-\sum_{j=1}^{\left \lceil{k/2}\right \rceil }\frac{1}{2j-1}\binom{2j-1}{j}{\bigg(\frac{\beta p}{\gamma+\beta p}\bigg)}^{j-1}{\bigg(\frac{\gamma}{\gamma+\beta p}\bigg)}^{j} \bigg) \bigg(\frac{\beta p+\gamma}{\beta p+\gamma+\delta}\bigg)^{k}.
\end{align*}

Now, it is sufficient to analyze the number $X_{i}$ of newly generated roots of components between each $(i-1)-$th and $i-$th jump. Given a to-be-reported component of size $k$ at that time, the roots of new components are generated at rate $k\beta(1-p)$ for an exponential time of parameter $k (\beta p+\gamma+\delta).$ This implies that the distribution of $X_{i}$ is geometrically distributed with 
parameter
\begin{equation*}
    1-\frac{k\beta(1-p)}{k(\beta p+\gamma+\delta)+k\beta(1-p)}
    =\frac{\beta p+\gamma+\delta}{\beta +\gamma+\delta}.
\end{equation*}
This parameter is independent of $k$ implying that the variables $X_{1},X_{2},\dots $, are identically and independently geometrically distributed as $X$.
So, between any two jumps, the probability of $k$ newly generated roots is given by
\begin{equation*}
    \mathbf{P}(X = k)=\left( \frac{\beta(1-p)}{\beta +\gamma+\delta} \right)^{k} \frac{\beta p+\gamma+\delta}{\beta +\gamma+\delta}.
\end{equation*}

Based on the former discussion, we conclude that the total number of roots of new components produced by a to-be-reported component can be written as
 \begin{equation*}
     Z=\sum_{i=1}^{N_{C}}X_{i}.
 \end{equation*}
As stated in Equation (\ref{eq:expressionRc}), due to independence, it follows  that 
\begin{equation*}
    R^{(c)}_{*}=\mathbb{E}[Z]=\mathbb{E}[N_{C}]\cdot \mathbb{E}[X],
\end{equation*}
where the expectation of $X$ is given by
\begin{equation*}
    \mathbb{E}[X]= \frac{\beta(1-p)}{\beta p+\gamma+\delta},
\end{equation*}
and for $\mathbb{E}[N_{C}]$ we have
\begin{equation*}
        \mathbb{E}[N_{C}] =1+\sum_{k=1}^{\infty}\mathbf{P}(N_{c} > k).
\end{equation*}

\subsection{Proof of Corollary 1}

In the following text we give the proof of Corollary \ref{corollary:majoroutbreak}. 

\begin{proof}[Proof of Corollary~{\upshape\ref{corollary:majoroutbreak}}]
Intuitively, we note that the limiting process of components will become extinct when $\mathbb{E}[Z] \leq 1.$ This implies that a minor outbreak will occur, if the component reproduction number $R^{(c)}_{*}=\mathbb{E}[Z]$ is smaller than or equal to one. 
Next, regarding to the situation when $R^{(c)}_{*}>1,$ the branching process is possible to explode, and so there will be a major outbreak in the epidemic. Now we put our focus on the probability $\pi$ of minor outbreak and the probability of major outbreak, namely $1-\pi$. 

We assume that there is one initial infective. As discussed in previous section, the probability of minor outbreak in the epidemic can be approximated by the probability of extinction in the limiting process at the early stage of outbreak. Given $k$ newly generated roots, the conditional probability of extinction is then clearly $\pi^{k}$. Thus, the probability $\pi$ is the solution on $[0,1]$ of the following equation:
\begin{equation}
\label{eq:minor_outbreak}
        \pi=\sum_{k=1}^{\infty}\pi^{k}\mathbf{P}(Z=k),
\end{equation}
where we note that the right-side of Equation (\ref{eq:minor_outbreak}) is exactly the probability generating function $\rho_{Z}(\pi)$ of $Z.$ For the computation of $\rho_{Z}$, we have
\begin{equation*}
    \rho_{Z}(s)=\mathbb{E}[s^{Z}]=\mathbb{E}[s^{\sum_{i=1}^{N_{C}}X_{i}}]
           =\rho_{N_{C}}( \rho_{X}(s)),
\end{equation*}
where the probability generating function $\rho_{X}$ of $X$ is given by
\begin{equation*}
    \rho_{X}(s)=\mathbb{E}[s^{X}]=\frac{\theta}{1-(1-\theta)s}
\end{equation*}
with $\theta = ({\beta p+\gamma+\delta})/({\beta +\gamma+\delta}),$
and the probability generating function $\rho_{N_{C}}$ of $N_{C}$ is given by
\begin{equation*}
    \rho_{N_{C}}(t)=\sum_{k=1}^{\infty}t^{k}\mathbf{P}(N_{C}=k)
\end{equation*}
with 
\begin{equation*}
    \mathbf{P}(N_{C}=k)
    =\mathbf{P}( N_{rw} = k)\mathbf{P}(N_D \geq k) + \mathbf{P}( N_{rw} > k)\mathbf{P}(N_D= k).
\end{equation*}
Finally we obtain the probability generating function $\rho_{Z}$ of $Z$:
\begin{equation}
\label{eq:rho_z}
    \begin{split}
       \rho_{Z}(s)&= \sum_{k=1}^{\infty}\bigg(\frac{\theta}{1-(1-\theta)s}\bigg)^{k}\mathbf{P}(N_{C}=k)\\
       &= \frac{\theta}{1-(1-\theta)s}\cdot (1-\mathbf{P}(N_{C}>1))\\
       &+\sum_{k=2}^{\infty}\bigg(\frac{\theta}{1-(1-\theta)s}\bigg)^{k}(\mathbf{P}(N_{C}>k-1)-\mathbf{P}(N_{C}>k)).
    \end{split}
\end{equation}
Solving the equation 
\begin{equation*}
    s=\rho_{Z}(s)
\end{equation*}
with Equation (\ref{eq:rho_z}) on $[0,1]$ gives us the smallest solution $\pi,$ which equals the probability of minor outbreak in the epidemic with one initial infective.
In addition to that, if there are initially $m$ infectives, small outbreak occurs with probability $\pi^{m}.$
\end{proof}
We then aim to derive the effective \emph{individual} reproduction number $R^{(ind)}_{*}=R^{(ind)}_{*}(\delta,p)$, which equals the expected number of infected cases generated by a random infectious individual (before being tested or recovering). 

We start with a new expression for our effective component reproduction number. Let $I_{c}$ be the overall number of individuals who have been infected in a to-be-reported component before it goes extinct, starting with one single infectious individual. For $k \geq 1,$ let
\begin{equation*}
    p_{k}=\mathbf{P}(I_{c}=k)
\end{equation*}
be the probability that there have been in total $k$ individuals infected in a component,
and let $r_{k}$ be the expected number of new roots generated by such a component, given that $I_{c}=k.$ It then follows that
\begin{equation*}
   R^{(c)}_{*} =\sum_{k=1}^{\infty} r_{k}p_{k}.
\end{equation*}
Further, during the early stage of epidemic, the probability $\Tilde{p}_{k}$ that an individual belongs to a component with $I_{c}=k,$ is given by 
\begin{equation*}
    \Tilde{p}_{k}=\frac{k p_{k}}{\mu_{c}}
\end{equation*}
(the size-biased distribution) with
\begin{equation*}
    \mu_{c}:=\mathbb{E}[I_{c}]=\sum_{j=1}^{\infty} j p_{j}.
\end{equation*}
Given that $I_{c}=k,$ there would be $(k-1)$ infections occurred in the component and $r_{k}$ average infections out of the component before it dies. In total such a component hence on average generate $k-1 + r_k$ infections and randomly chosen individual hence infects $((k-1)+r_{k})/k$ on average.
This implies that our effective individual reproduction number is given by 
\begin{equation}
\label{eq:expression_R_ind}
    R^{(ind)}_{*}=\sum_{k=1}^{\infty}\frac{(k-1)+r_{k}}{k} \Tilde{p}_{k} = \sum_{k=1}^{\infty} \left( (k-1)+r_{k}\right) \frac{p_{k}}{\mu_{c}} .
\end{equation}
Simplifying the expression for $R^{(ind)}_{*}$ in Equation (\ref{eq:expression_R_ind}) gives us
\begin{equation*}
    R^{(ind)}_{*}=1-\frac{1}{\mu_{c}}+\frac{R^{(c)}_{*}}{\mu_{c}}.
\end{equation*}
The equation above shows us that $R^{(ind)}_{*}$ is smaller than, equal to, or larger than 1, if and only if $R^{(c)}_{*}$ is smaller than, equal to, or larger than 1. 

Intuitively, we can explain this relation between $R^{(ind)}_{*}$ and $R^{(c)}_{*}$ as follows. On one hand, the $R^{(c)}_{*}$ can be considered as the average infections produced outside the component. On other hand, $\mu_{c}$ is the average number of individuals who have been infectious in the component. We notice that one of $\mu_{c}$ would be the root of this component, while the other $\mu_{c}-1$ are internal infections in this component. Then there are in total $(R^{(c)}_{*}+\mu_{c}-1)$ infections on average by this component. Hence, the average number of infections per individual becomes $(R^{(c)}_{*}+\mu_{c}-1)/\mu_{c}.$    

It remains to compute the expected number $\mu_{c}$ of the individuals who have been infectious in a component, since we have already derived $R^{(c)}_{*}$. We start by letting $J_{+}=\sum_{k=1}^{\infty} I_{k}$ denote the number of up-jumps of the random walk $\{S_{n}, n\geq 0\}$ before it dies out, where $I_{k}$ is the indicator variable with $I_{k}=1$ if the $k$-th jump of the random walk is an up-jump, i.e. for any $k\geq 1,$
\begin{equation*}
    \mathbf{P}(I_{k}=1\vert N_{c} > k-1)=\pi_{rw}
\end{equation*}
and 
\begin{equation*}
    \mathbf{P}(I_{k}=1\vert N_{c}\leq k-1)=0.
\end{equation*}
Then we conclude that 
\begin{equation*}
    \mu_{c}=1+\mathbb{E}[J_{+}]=1+\sum_{k=1}^{\infty} \mathbf{P}(I_{k}=1)=1+\pi_{rw} \sum_{k=1} \mathbf{P}(N_{c}>k-1)=1+ \frac{\beta p}{\beta p +\gamma} \mathbb{E}[N_{c}].
\end{equation*}
Together with Equation (\ref{eq:expressionRind}), we prove the final expression for $R^{(ind)}_{*}$ given by 

\begin{equation}
\label{eq:exact_R_ind}
    R^{(ind)}_{*} =\frac{R^{(c)}_{*}+\frac{\beta p}{\beta p +\gamma} \mathbb{E}[N_{c}]}{1+\frac{\beta p}{\beta p +\gamma} \mathbb{E}[N_{c}]}.
\end{equation}

\begin{myrem}\label{remark_ind}
This individual reproduction number has 
the correct threshold property, since it equals 1 exactly when 
$R_*^{(c)}$ does. However, $R_*^{(ind)}$ cannot be interpreted as the 
average number of infections caused by infected people in the beginning 
of the outbreak. This is because of delicate timing of events issues, 
closely related to those explained in \citep{ball_trapman_2016}.

\end{myrem}

\subsection{The limiting process in the {SI-{TT}} Model}
In the \textbf{SI-TT} model there is no natural recovery: $\gamma =0$. This special case turns out to give simpler explicit expressions.  The reason for the simplification is that a component can then \textit{only} go extinct due to an infectious individuals being diagnosed (resulting in the whole to-be-reported component being contact traced) whereas for the general model extinction may also happen because all individuals has recovered before a new to-be-reported infection took place.

We are interested in the number $Z_{SI-TT}$ of roots of new components produced by a to-be-reported component before it dies (i.e.\ is diagnosed). As before, the current number of infectious individuals does not affect the probability of the next jump being a new root or a diagnosis event. We can hence neglect the infections and conclude that there will be a geometrically distributed number of roots produced before the component is diagnosed. The parameter of the geometric distribution is simply the probability of a diagnosis rather than a new root: $\delta/(\delta+\beta(1-p))$. In conclusion, we have for any $k=0,1,2,\cdots$ that
the unconditional density of $Z_{SI-TT}$ is given by
\begin{equation*}
    \begin{split}
        \mathbf{P}(Z_{SI-TT}=
        k)
        &=\big(1-\frac{\delta}{\delta+\beta(1-p)}\big)^{k}\frac{\delta}{\delta+\beta(1-p)}.
    \end{split}
\end{equation*}

Then in the SI-TT model, we are able to derive the \textbf{effective component reproduction number} $R^{(c)}_{*,SI-TT}$ as
\begin{equation*}
    R^{(c)}_{*,SI-TT}=\mathbf{E}[Z_{SI-TT}]=\frac{\beta(1-p)}{\delta}.
\end{equation*}
Again following the idea proving Corollary \ref{corollary:majoroutbreak}, we show Corollary \ref{corollary:SI-TT} as follows. 
\begin{proof}[Proof of Corollary~{\upshape\ref{corollary:SI-TT}}]
By finding the smallest solution $s$ on $[0,1]$ of equation
$
    s=\rho(s)
$
with $\rho(\cdot)$ the probability generating function of $Z_{SI-TT},$
we obtain that if $ R^{(c)}_{*,SI-TT} > 1,$ the probability of minor epidemic outbreak equals
\begin{equation}
    \pi=\frac{\delta}{\beta(1-p)}=\frac{1}{R^{(c)}_{*,SI-TT}}.
\end{equation}
In the case when $R^{(c)}_{*,SI-TT} \leq 1,$ the branching process will be extinct with probability $\pi=1$, implying that a major outbreak occurs with probability 0.
\end{proof}
Moreover, using the same idea of computing the \textbf{effective individual reproduction number}
in general case, here we first have the expected number of infected cases generated by the root of a component before diagnosed given by 
\begin{equation*}
    \mu_{c,SI-TT}=1+\mathbb{E}[N^{(SI-TT)}_{c}]=1+\frac{\beta p}{\delta}=\frac{\beta p+\delta}{\delta}. 
\end{equation*}
Then as stated in Section \ref{sec:main-results}, the effective individual reproduction number for this case without natural recovery has the form 
\begin{equation}
\label{eq: expression_R_TT}
    R^{(ind)}_{*,SI-TT}=\frac{\mu_{c,SI-TT}+R^{(c)}_{*,SI-TT}-1}{\mu_{c,SI-TT}}= \frac{\beta}{\beta p+\delta}.
\end{equation}

\begin{myrem}\label{remark_ind_SI}
Similar to Remark \ref{remark_ind}, this individual reproduction number possesses the correct threshold property but not the traditional interpretation as the average number of infections per individual in the beginning of the epidemic. It is also easily observed from Equation (\ref{eq: expression_R_TT}) that the effective individual reproduction number in this SI-TT model is monotone decreasing with tracing probability $p.$

\end{myrem}

\section{Proof of Theorem 2}
\label{sec:mainphase}
In previous section, we applied coupling methods to approximate the epidemic at its early stage. In this section, we give an approximation of the main phase, where the epidemic is initiated with positive fraction $\varepsilon$ of infectives. Here, we describe our original full model in the way of evolution of the clumps. By "clumps", we mean the to-be-reported components, and we only need to keep track of number of infectious individuals in each clump. 
 In a population of size $n$, we assume that the {number} of initial infectives equals $\varepsilon n$ and the {number} of initial susceptibles equals $(1-\varepsilon )n$. At time $t\geq 0,$ let $S^{(n)}(t)$ be the number of susceptible individuals with initial value 
\begin{equation}
    S^{(n)}(0)=(1-\varepsilon )n.
\end{equation}
For $j=1,2,...,n,$ let $I^{(n)}_j(t)$ be the number of  individuals that are infectious and belong to a to-be-reported component currently containing $j$ infectives. So, we have the total number of infectious individuals at time $t$, 
\begin{equation}
    I^{(n)}(t)=\sum_{j=1}^{n}I^{(n)}_j(t),
\end{equation}
with initial values
\begin{equation}
    I^{(n)}(0)=I^{(n)}_1(0)=\varepsilon  n,~I^{(n)}_2(0)=\cdots=I^{(n)}_n(0)=0.
\end{equation}
Let $R^{(n)}(t)$ 
denote the number of individuals which are recovered (counting both naturally recovered and diagnosed)
with initial value
\begin{equation}
    R^{(n)}(0)=0.
\end{equation}
It is then clear that for any time $t \geq 0,$
\begin{equation}
    S^{(n)}(t)+I^{(n)}(t)+R^{(n)}(t)=n.
\end{equation}

Next we prove Theorem \ref{theorem:mainphase}, stating that the stochastic epidemic process denoted by
\begin{equation*}
    E^{(n)}(t)=(S^{(n)}(t)/n
,I^{(n)}_1(t)/n,I^{(n)}_2(t)/n,...,I^{(n)}_n(t)/n),
\end{equation*}
converges to a deterministic process.
\begin{proof}[Proof of Theorem~{\upshape\ref{theorem:mainphase}}]
Below we study the corresponding truncated processes by maximizing the clump sizes to some large positive integer $K$. The corresponding processes are finite dimensional for which we apply theory for density dependent population processes. These results can then be extended to the original infinite dimensional systems (with arbitrary clump size) by observing that that the maximal clump sizes are exponentially small in $K$. As a consequence, the truncated processes can be made arbitrarily close to the original infinite dimensional processes by choosing $K$ large enough. The fact that the processes are exponentially small is a direct consequence of that the epidemic processes SIR-TT may be dominated by the SI-TT (without natural recovery), and this process will have a geometrically distributed maximal clump size with parameter $\delta/(\beta p +\delta)$. We omit the details of this argument and now should show that the truncated stochastic epidemic process converges to the truncated deterministic system.

More precisely, using Kurtz's theory of Markovian Population processes \citep{andersson_stochastic_2000}, we show that the truncated stochastic "density" process, denoted by
\begin{equation*}
    E^{(n)}_{K}(t)=(S^{(n)}(t)/n,I^{(n)}_1(t)/n,I^{(n)}_2(t)/n,...,I^{(n)}_K(t)/n)
\end{equation*}
converges to a $K-$dimensional deterministic process 
\begin{equation*}
    {E^{\infty}_K}(t)=(s(t),i_1(t),i_2(t),...,i_K(t)),
\end{equation*}
which is defined by the finite system of differential equations as below.
\begin{equation}
\label{eq:diff_s(t)}
    s'(t)=-\beta s(t)i(t),
\end{equation}
\begin{equation}
\label{eq:diff_i1(t)}
{i_1}'(t)=\beta(1-p)i(t)s(t)+\gamma i_{2}(t)-\beta p i_{1}(t)s(t)-(\gamma +\delta)i_{1}(t),
\end{equation}
and for $j=2,3,...,(K-1)$ we have 
\begin{equation}
\label{eq:diff_ij(t)}
{i_j}'(t)= j\beta {p}i_{j-1}(t)s(t)+j\gamma i_{j+1}(t)-j\beta p i_{j}(t)s(t)-j(\gamma+\delta)i_{j}(t),
\end{equation}
whereas in the case of $j=K$,
\begin{equation}
\label{eq:diff_iK(t)}
{i_K}'(t)= K\beta {p}i_{K-1}(t)s(t)-K\beta p i_{K}(t)s(t)-K(\gamma+\delta)i_{K}(t).
\end{equation}
And the corresponding initial conditions are 
\begin{equation}
\label{eq:initialcondition_s}
    s(0)=1-\varepsilon,
\end{equation}
\begin{equation}
\label{eq:initialcondition_i1}
     i(0)=i_1(0)=\varepsilon,
\end{equation}
and 
\begin{equation}
\label{eq:initialcondition_iK}
    i_2(0)=....=i_K(0)=0.
\end{equation}
Essentially, we check if we are allowed to use the Theorem 5.2 stated in \citep{andersson_stochastic_2000} to show the convergence of truncated density process $E^{(n)}_{K}(t).$
First of all, we notice that there are several jumps which can affect the process. In the case of a new non-to-be-reported infection, the process changes by 
$(-1,1,0,...,0)$
with the corresponding jump intensity function 
\begin{equation*}
    f_{(-1,1,0,...,0)}(s,i_1,i_2,...,i_K)=\beta(1-p)s\sum_{j=1}^K i_j. 
\end{equation*}
If there is a new to-be-reported infection comes to the component of size $1,$ then the process changes by 
$
    (-1,-1,2,0,...,0)
$
with the corresponding jump intensity function 
\begin{equation*}
    f_{(-1,-1,2,0,...,0)}(s,i_1,i_2,i_3...,i_K)=\beta p s i_1.
\end{equation*}
When there is a natural recovery comes to the component of size $1$ the process changes by 
$
    (0,-1,0,...,0)
$
with the corresponding jump intensity function 
\begin{equation*}
    f_{(0,-1,0,...,0)}(s,i_1,i_2,...,i_K)=\gamma i_1,
\end{equation*}
whereas if the whole component of size $1$ is diagnosed, the process would change by 
$
    (0,-1,0,...,0)
$
with the corresponding jump intensity function 
\begin{equation*}
    f_{(0,-1,0,...,0)}(s,i_1,i_2,...,i_K)=\delta i_1.
\end{equation*}
In the case when there is a new to-be-reported infection comes to the component of size $j=2,..,(K-1),$ the process changes by 
$
    (-1,0,...,0,-j,j+1,0,...,0)
$
with the corresponding jump intensity function 
\begin{equation*}
    f_{(-1,0,...,0,-j,j+1,0,...,0)}(s,i_1,...,i_{j-1},i_{j},i_{j+1},i_{j+2},...,i_{K})=\beta p s i_j.
\end{equation*}
Moreover, for the component of size $K,$ if there is a new to-be-reported infection occurs, then the process changes by 
$
    (-1,0,...,0,-K)
$
with the corresponding jump intensity function 
\begin{equation*}
    f_{(-1,0,...,0,-K)}(s,i_1,...,i_{K-1},i_{K})=\beta p s i_K.
\end{equation*}
For $j=2,..,K,$ if there is a natural recovery comes to the component of size size $j,$ then the process changes by 
$
    (0,0,...,0,j-1,-j,0,...,0)
$
with the corresponding jump intensity function 
\begin{equation*}
    f_{(0,0,...,0,j-1,-j,0,...,0)}(s,i_1,...,i_{j-2},i_{j-1},i_{j},i_{j+1},...,i_{K})=\gamma i_j,
\end{equation*}
Further, the process changes by 
$
    (0,0,...,0,0,-j,0,...,0)
$
if a component of size $j$ is diagnosed, the corresponding jump intensity function is given by
\begin{equation*}
    f_{(0,0,...,0,0,-j,0,...,0)}(s,i_1,...,i_{j-2},i_{j-1},i_{j},i_{j+1},...,i_{K})=\delta i_j.
\end{equation*}
Then we obtain the drift function $F$ defined in the Section 5.3 of \citep{andersson_stochastic_2000}
,which is here given by 
\begin{align*}
    F(s,i_1,i_2,...,i_{K}) 
    &= \begin{pmatrix}
          -\beta s i \\
           \beta(1-p)si+\gamma i_{2}-\beta p s i_{1}-(\gamma +\delta)i_{1}\\
           2\beta {p}si_{1}+2\gamma i_{3}-2\beta p s i_{2}-2(\gamma+\delta)i_{2}\\
           \vdots \\ 
           K\beta {p} s i_{K-1}-K\beta p s i_{K}-K(\gamma+\delta)i_{K}
         \end{pmatrix}.
  \end{align*}
It can be shown that for any $x=(s,i_1,i_2,...,i_{K})$ and $y=(s',i'_1,i'_2,...,i'_{K})$ in domain
$$C=\{x=(x_k)_k \in \mathbb{R}^{K+1}: 0\leq x_k\leq 1,k=1,..,K+1\}, $$ there exists a bound $M > 0$ such that
\begin{equation*}
    \vert F(x)-F(y) \vert \leq  M \vert x-y \vert,
\end{equation*}
with the absolute norm $\vert \cdot \vert$ in $\mathbb{R}^{K+1}.$ This bound $M$ can be roughly given by 
\begin{equation*}
    M=\max\{2\beta+2\beta p K^{2},2\beta+(2\beta p+2\gamma+\delta) K^{2}-\beta p\}.
\end{equation*}
Finally, we are now allowed to apply the Theorem 5.2 in \citep{andersson_stochastic_2000}, which showing that on any bounded intervals $[0,t_{end}],$ the truncated "density" process
$$E^{(n)}_{K}(t)=(S^{(n)}(t)/n,I^{(n)}_1(t)/n,I^{(n)}_2(t)/n,...,I^{(n)}_K(t)/n)$$ converges almost surely to the deterministic process 
$${E^{\infty}_K}(t)=(s(t),i_1(t),i_2(t),...,i_K(t)),$$ 
which is defined by Equation (\ref{eq:diff_s(t)})-(\ref{eq:initialcondition_iK}).

This convergence of the truncated processes combined with the earlier sketch of why the truncated processes approximate the infinite systems well by choosing $K$ large completes the proof of Theorem \ref{theorem:mainphase}.

\end{proof}
\section{Numerical Illustrations}
\label{sec:numerical}
\subsection{Original Model}
In this section we perform simulations supporting our large population results, and also investigate the effect of the TT-strategy. We do this mainly for the following parameter values (inspired from the Covid-19 pandemic). Before the TT-strategy is applied we have the Markovian SIR epidemic model with $\beta=0.75$ and $\gamma=0.25$, implying an average infectious period of $1/\gamma= 4$ days and a basic reproduction number $R_0=\beta/\gamma=3$. When the TT-strategy is considered fix, we assume that
 $\delta=0.125$ and $p=0.5$ implying that 1/3 of the infected individuals are tested and isolated while still infectious and that half of their contacts are reported for contact tracing (\citep{lucas_engagement_2020} believed that the fraction $p$ of contacts that were successfully traced varies between $40\%$ and $80\%$. 
Moreover in the following text, whenever computing the reproduction numbers $R^{(c)}_{*}$ and $R^{(ind)}_{*}$ numerically, we approximate the infinite sum in Equation (\ref{eq:expressionENc}) by a finite sum with truncation size of $100.$

First we performed 10~000 simulations of the epidemic and stored the final number infected in each simulation. We did this for three different population sizes, $n=$ 1000, 5000 and 10~000, each simulation starting with one initial infectious individual. We say (quite arbitrarily) that there is a minor outbreak when at most $10\%$ get infected during the outbreak, otherwise a major outbreak occurs. We summarize the fraction of minor outbreaks and the empirical mean fraction of infected individuals among the major outbreak cases in {Table \ref{tab:simulationresults}}. To these simulations we add a line for the limiting results (denoted by $n=\infty$). In this line we have derived the minor outbreak probability using Equation (\ref{eq:generatingfct}) with truncated sum up to $100$ and the mean fraction of the major outbreaks is computed numerically using Equations (\ref{eq:diff_s(t)})-(\ref{eq:initialcondition_iK}) with truncation size $K=100$ where $r_\infty$ is approximated by $r(t)$ for $t=100$~Days and $\varepsilon = 0.01$ which shows evidence that our limiting approximations work quite well already for these moderate population sizes. In particular, we observe from the second column of Table \ref{tab:simulationresults} that the mean fraction of the major outbreaks becomes closer to the deterministic limit $r_\infty$ for larger $n$ (see Conjecture \ref{conjecture}). As shown in Fig.~\ref{fig:h_1000_major},\ref{fig:h_5000_major} and \ref{fig:h_10000_major} the distribution for the major outbreaks is more peaked when the population is larger. We also note from those zoomed histograms for the major outbreaks (Fig. \ref{fig:h_1000_major},\ref{fig:h_5000_major} and \ref{fig:h_10000_major}), that they seem to follow a normal distribution with the deterministic limit as center, especially for larger $n$ (see Remark \ref{remark_CLT}). 
\begin{figure}
\begin{subfigure}{0.5\textwidth}
\centering
\includegraphics[width=59mm]{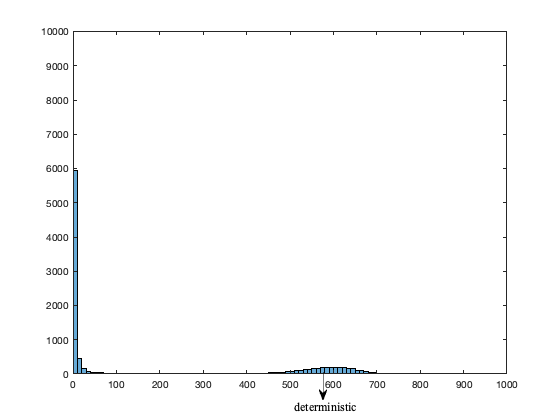}
\caption{}
\label{fig:h_1000_full}
\end{subfigure}
     \hfill
     \begin{subfigure}{0.5\textwidth}
     \centering
         \includegraphics[width=60mm]{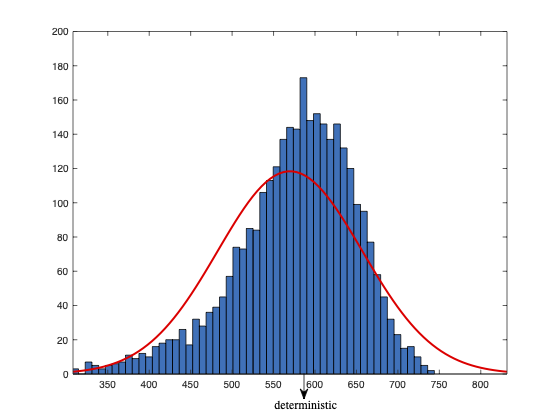}
         \caption{}
         \label{fig:h_1000_major}
     \end{subfigure}
     \hfill
     \begin{subfigure}{0.5\textwidth}
\centering
\includegraphics[width=59mm]{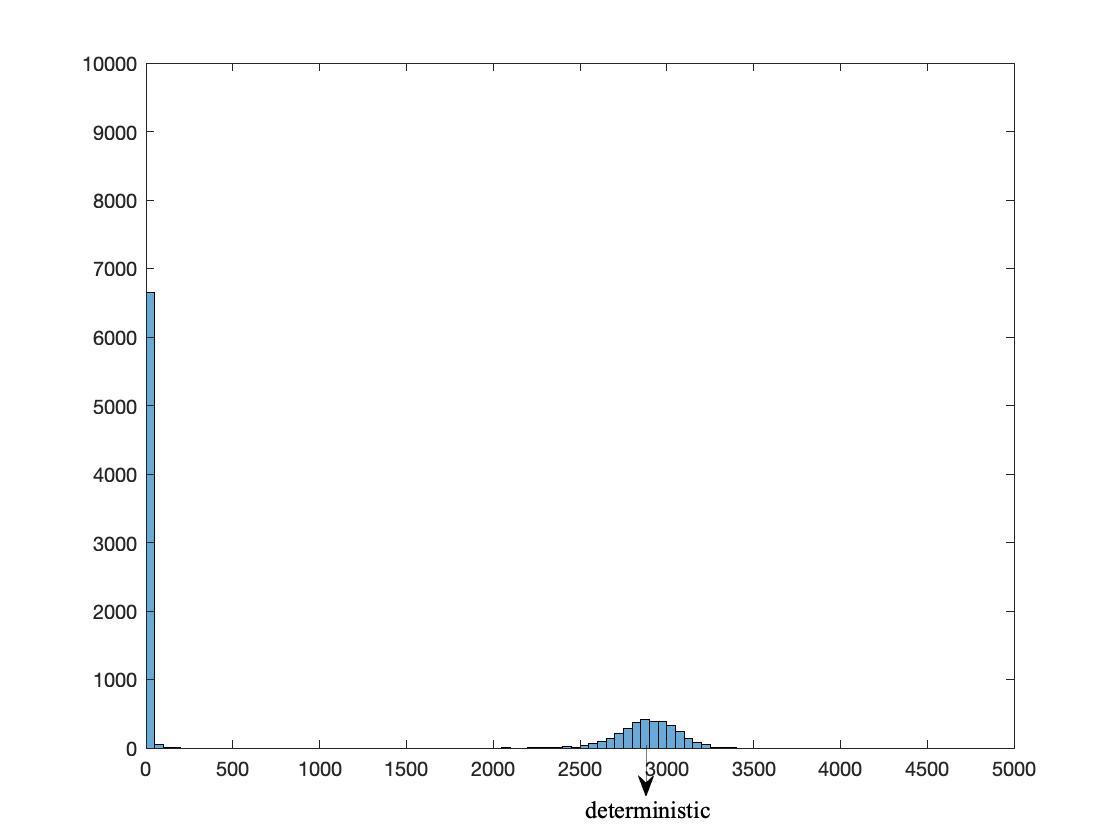}
\caption{}
\label{fig:h_5000_full}
\end{subfigure}
\hfill
     \begin{subfigure}{0.5\textwidth}
     \centering
         \includegraphics[width=60mm]{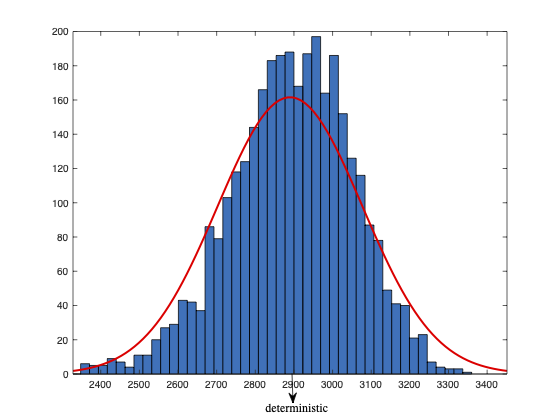}
         \caption{}
         \label{fig:h_5000_major}
     \end{subfigure}
     
    \begin{subfigure}{0.5\textwidth}
\centering
\includegraphics[width=59mm]{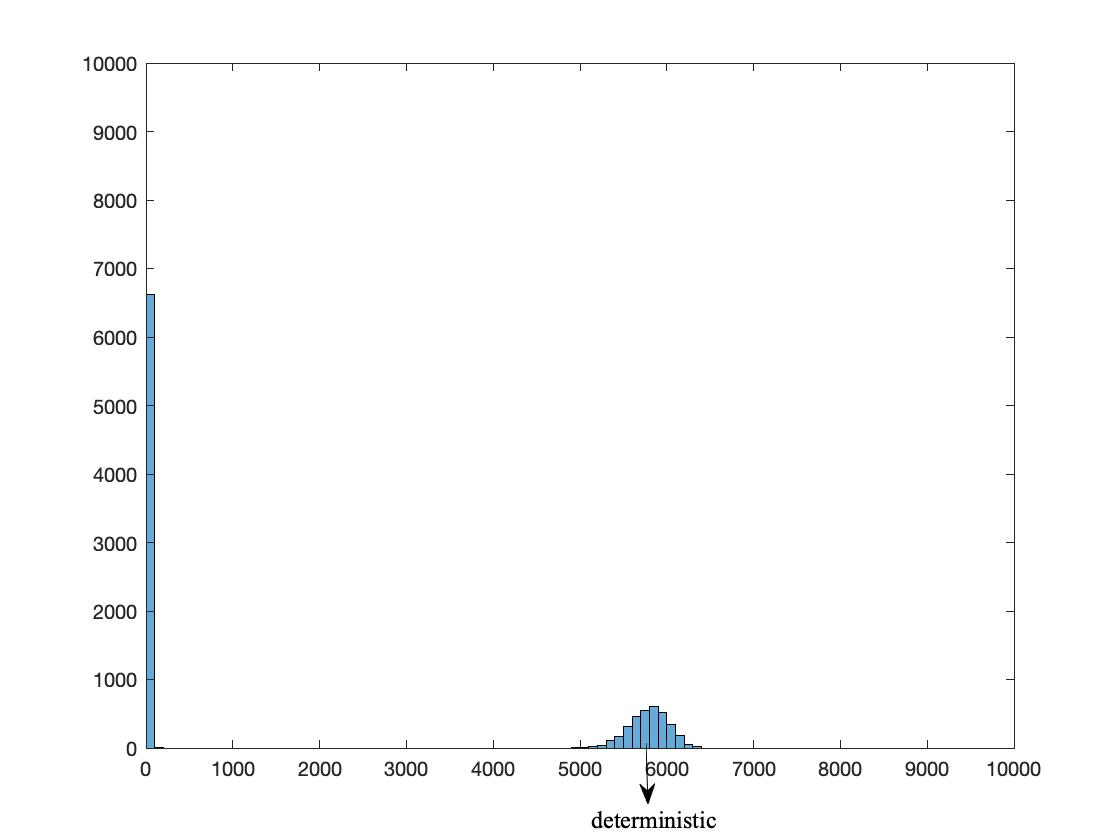}
\caption{}
\label{fig:h_10000_full}
\end{subfigure}
     \hfill
     \begin{subfigure}{0.5\textwidth}
     \centering
         \includegraphics[width=60mm]{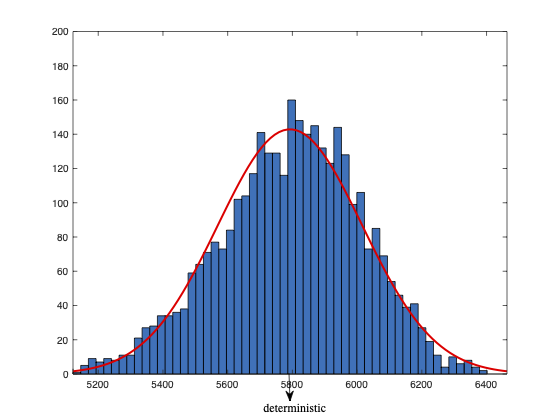}
         \caption{}
         \label{fig:h_10000_major}
     \end{subfigure}
     
     \caption{Histogram of the final size in 10 000 simulations of epidemic with population size in \ref{fig:h_1000_full}-\ref{fig:h_1000_major} $n=1000$, in \ref{fig:h_5000_full}-\ref{fig:h_5000_major} $n=5000$ and in \ref{fig:h_10000_full}-\ref{fig:h_10000_major} $n=10~000$, starting with one initial infective with full histogram to the left and zoomed in on the major outbreaks to the right with normally fitted curve in red}
	
        \label{fig:histogram}
\end{figure}
\begin{table}[ht]
\caption{Simulation results about final fraction infected}
\label{tab:simulationresults}       
\centering
\begin{tabular}{llll}
\hline\noalign{\smallskip}
size of & mean fraction of & standard & fraction of \\ 
  population & infected among major outbreaks & deviation & minor outbreaks \\ 
\noalign{\smallskip}\hline\noalign{\smallskip}
1000 & 0.5698 & 0.0873 & 0.6803 \\ 
5000 & 0.5786 & 0.0323 & 0.6707 \\ 
10000 & 0.5793 & 0.0224 & 0.6622 \\
$\infty$ & 0.5790 (=$r_\infty$)& 0 & 0.6667 \\
\end{tabular}
\end{table}

Next, we illustrate the threshold results saying that when $R^{(c)}_{*}\le 1$ we expect only minor outbreaks to take place whereas when $R^{(c)}_{*}> 1$ also large outbreaks may occur. We first fix the parameters $(\gamma,\delta,p)=(0.25,0.125,0.5),$ and choose the $\beta$ to be 0.40, 0.50, 0.59 and 0.67, so that the corresponding effective component reproduction number takes values of 0.75, 1.00, 1.25 and 1.50 using Equation (\ref{eq:expressionRc}). Then for each case of $R^{(c)}_{*}$, we did 10 000 simulations of the epidemic with fixed size of population $5000.$ In Table \ref{tab:simulationresults_R}, we show the fraction of minor outbreaks and the mean fraction of infected individuals among the major outbreaks. We see that in the case of \mbox{$R^{(c)}_{*} < 1$} there are nearly no major outbreaks, whereas more major outbreaks occur for \mbox{$R^{(c)}_{*} > 1$}.  As \mbox{$R^{(c)}_{*}$} grows bigger, there are larger outbreaks.

\begin{table}[ht]
\caption{Simulation results of epidemics with fixed $n=5000$}
\label{tab:simulationresults_R}       
\centering
\begin{tabular}{llll}
\hline\noalign{\smallskip}
reproduction number & mean fraction of & standard & fraction of \\ 
$R^{(c)}_{*}$ & infected among major outbreaks & deviation & minor outbreaks \\ 
\noalign{\smallskip}\hline\noalign{\smallskip}
0.75 & 0.1315  & 0.0275 & 0.9945\\ 
1.00 & 0.1867 & 0.0666 & 0.9452\\ 
1.25 & 0.3037 & 0.1100 &  0.8442\\
1.50 & 0.4437 & 0.1150 &  0.7561  \\
\end{tabular}

\end{table}

We now study the time evolution of the epidemics showing that it becomes less random as population size $n$ increases, as stated in Theorem \ref{theorem:mainphase}. We do this by plotting random epidemic processes $\{I_n(t)/n\}$ and comparing it with the limiting deterministic process $\{i(t)\}$. 
As before, we use the parameter values $(\beta,\gamma,\delta,p)=(0.75,0.25,0.125,0.5)$. 
More specifically we plot the deterministic (in red) curves of the fraction of infectives when the population size is 1000, 5000 and 10 000 respectively. For each population size, we plot the fraction of infected for one simulation (in black), then we did 10 simulations given each size of population and plot the empirical mean of the fraction of infected (in blue). We can see from {Fig.~\ref{fig:fract_infect}} that the larger the population size, the better the truncated deterministic process approximates the epidemic process. All simulations were started with 1\% being infectious ($I_n(0)/n=0.01$) and the rest susceptible. The deterministic fraction of infectives are derived by solving Equations (\ref{eq:diff_s(t)})-(\ref{eq:initialcondition_iK}) with $\varepsilon = 0.01$ and truncation size $K=100$.
\begin{figure}
\begin{subfigure}{0.3\textwidth}
\centering
\includegraphics[width=39mm]{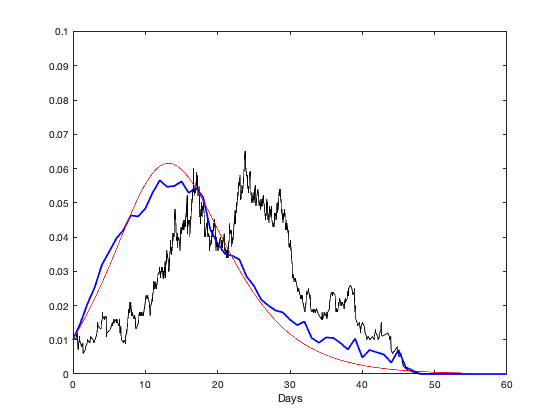}
\caption{}
\label{fig:I_1000}
\end{subfigure}
     \hfill
     \begin{subfigure}{0.3\textwidth}
     \centering
         \includegraphics[width=40mm]{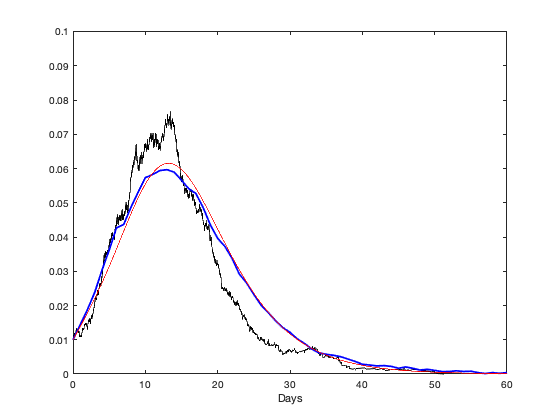}
         \caption{}
         \label{fig:I_5000}
     \end{subfigure}
     \hfill
     \begin{subfigure}{0.3\textwidth}
     \centering
         \includegraphics[width=40mm]{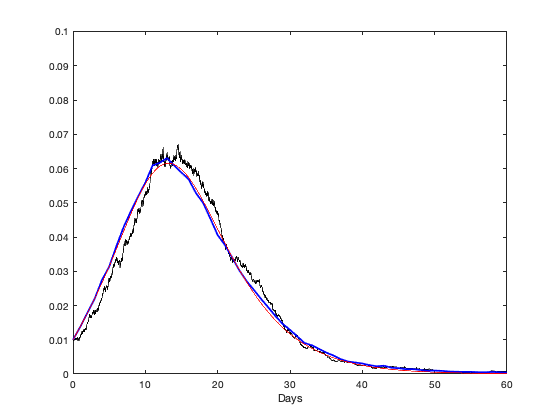}
         \caption{}
         \label{fig:I_10000}
     \end{subfigure}
        \caption[]{Fraction of infectives with population size of {\ref{fig:I_1000}} 1000, \ref{fig:I_5000} 5000 and {\ref{fig:I_10000}} 10 000  with $1\%$ initial infectives. 
        The fraction of infectious individuals for one stochastic simulation is in black, the one for deterministic is in red, whereas the empirical mean of ten simulations is in blue}
        \label{fig:fract_infect} 
\end{figure}

Moreover, we investigate the effect of TT strategy. We recall that $\delta$ denotes the rate of testing (either broad screening or more targeted testing) and isolate those who test positive immediately, and $p$ denotes the fraction of all contacts of infectious individuals that are successfully contact traced. In {Fig.~\ref{fig:R_original}}, we plot the effective reproduction numbers $ R^{(c)}_{*}$ and $R^{(ind)}_{*}$ derived by Equations (\ref{eq:expressionRc}) and (\ref{eq:expressionRind}) respectively, as a function of the fraction of infectives being tested (before natural recovery) $\delta/(\delta+\gamma)$ in $[0, 0.5]$ and of $p$ in $[0, 1]$, keeping the other two parameters fixed at $\beta=0.75$ and $\gamma=0.25.$ {Fig.~\ref{fig:R_c_originalmodel}} shows that, surprisingly, $R^{(c)}_{*}$ is not monotone in $p$, whereas {Fig.~\ref{fig:R_ind_originalmodel}} shows that the individual reproduction number $R^{(ind)}_{*}$ seems to be, as expected. As seen from the contour lines where $R^{(ind)}_{*}=2.5, 2, 1.5, 1,$ the lines are steeper with lower $R^{(ind)}_{*}$. When it comes to comparing the effects of $p$ and $\delta/(\delta+\gamma)$ on $R^{(ind)}_{*},$ it seems as if $\delta/(\delta+\gamma)$ is more influential for high values (larger than 2.5 in this case) on $R^{(ind)}_{*}$, whereas for lower values (smaller than 2) on $R^{(ind)}_{*}$, tracing is more influential in preventing a major outbreak (i.e. reducing $R^{(ind)}_{*}$ below 1).  

\begin{figure}[h]
\begin{subfigure}[h]{0.5\textwidth}
\centering
\includegraphics[width=59mm]{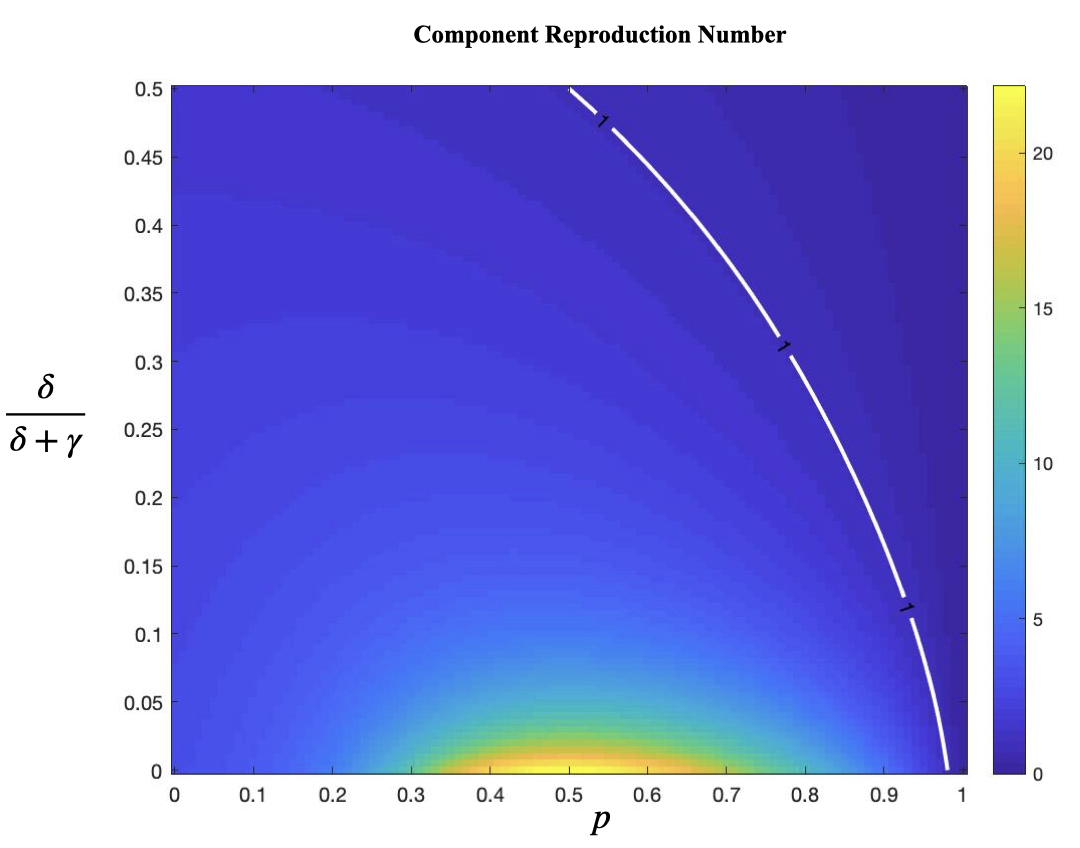}
\caption{}
\label{fig:R_c_originalmodel}
\end{subfigure}
     \hfill
\begin{subfigure}[h]{0.5\textwidth}
\centering
\includegraphics[width=59mm]{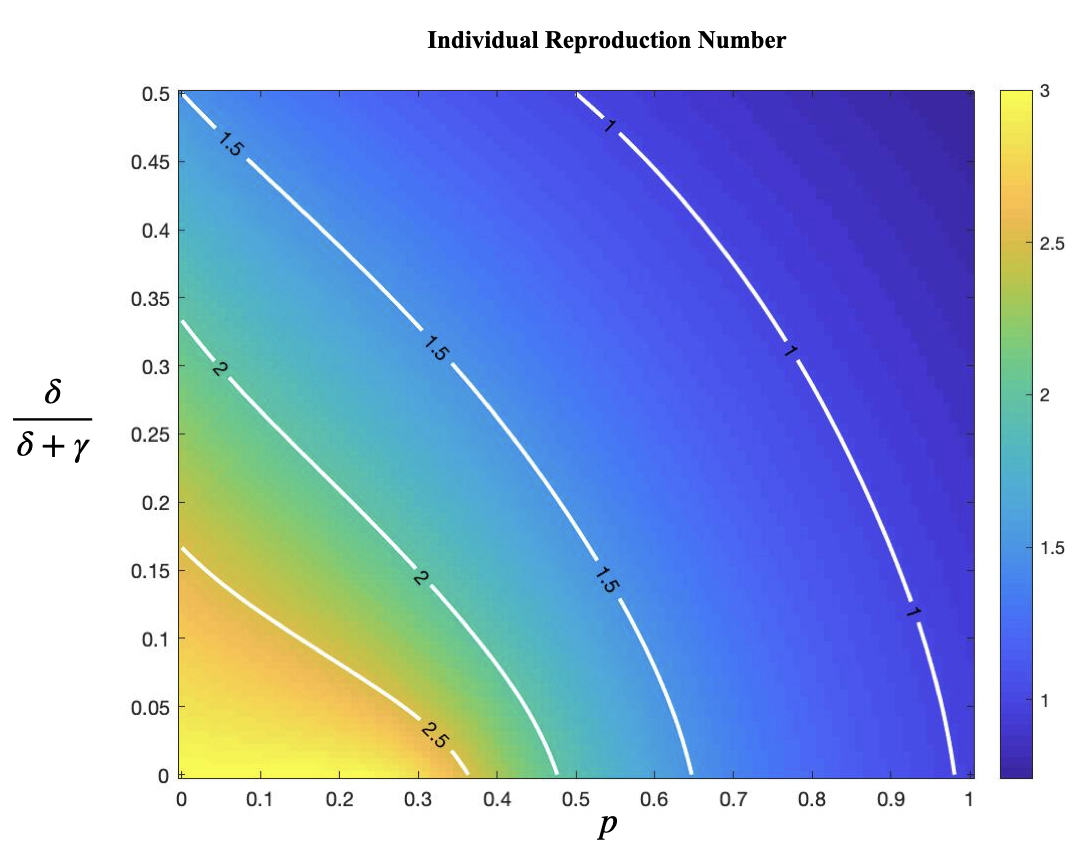}
\caption{}
\label{fig:R_ind_originalmodel}
\end{subfigure}
\caption{Heat map of the effective reproduction number, \ref{fig:R_c_originalmodel} for component $R^{(c)}_{*}$ and \ref{fig:R_ind_originalmodel} for individual $R^{(ind)}_{*}$, in a fine grid of $p=0, 0.05,..., 0.95, 1$ and $\delta/(\delta+\gamma)=0, 0.005,..., 0.495, 0.5.$ with $\beta=0.75, \gamma=0.25$ fixed. The white lines indicate where $R^{(c)}_{*}=1$ and $R^{(ind)}_{*}=2.5, 2, 1.5, 1,$ respectively
}
\label{fig:R_original}
\end{figure}

\subsection{Alternative model interpretation}
Finally, we turn to focus to the alternative model interpretation, where instead of $(\gamma,\delta)$, we have  $(\gamma,\nu+\delta)$ with $\gamma$ being the rate of natural recovery, $\nu$ the rate of self-reporting and $\delta$ the rate of screening. To start we fix $\beta=0.75$ and $\gamma=1/12, \nu=2\gamma=1/6$ implying that before screening ($\delta =0$), there would be $\nu/(\gamma+\nu) = 2/3$ of the infectious individuals get tested and self-isolated by own initiative. In Fig.~\ref{fig:R_newmodel}, we plot two reproduction numbers $R^{(c)}_{*}$ and $R^{(ind)}_{*}$ as a function of $p$ in $[0,1]$ and of the screening fraction $\delta/(\delta+\nu+\gamma)$ in $[0, 0.5]$. 

In the lower panels we show the corresponding heat maps, but now for the case $\gamma=0.2$ and $\nu=0.05$ implying that only 1/5 of infectives self-report, thus being closer to the original model where no individuals get tested prior to the introduction of screening. It is seen that whether the component reproduction number is monotone in $p$ or not depends on what fraction that self-report when having symptoms. Another difference as compared to the original model is that the tracing probability $p$ clearly has a bigger impact on reducing $R^{(ind)}_{*}$. An explanation to this would be that both self-tested individuals and those being screened will be contact traced.

Furthermore, if we assume that infectives who develop symptoms recover only due to diagnosis/self-reporting, then the fraction of asymptomatic individuals is exactly the fraction $\gamma/{(\gamma+\nu)}$ of infectives who do not self-report and are naturally recovered (without screening). By observing the steeper contour lines in Fig.~\ref{fig:R_ind_newmodel} with smaller fraction of asymptomatic infectives compared with that in Fig.~\ref{fig:R_ind_newmodel_gamma}, it implies that the tracing plays an even bigger role on reducing the individual reproduction number when there are larger fraction of individuals who are symptomatic.

\begin{figure}[h!]
   \begin{subfigure}{0.5\textwidth}
\centering
\includegraphics[width=59mm]{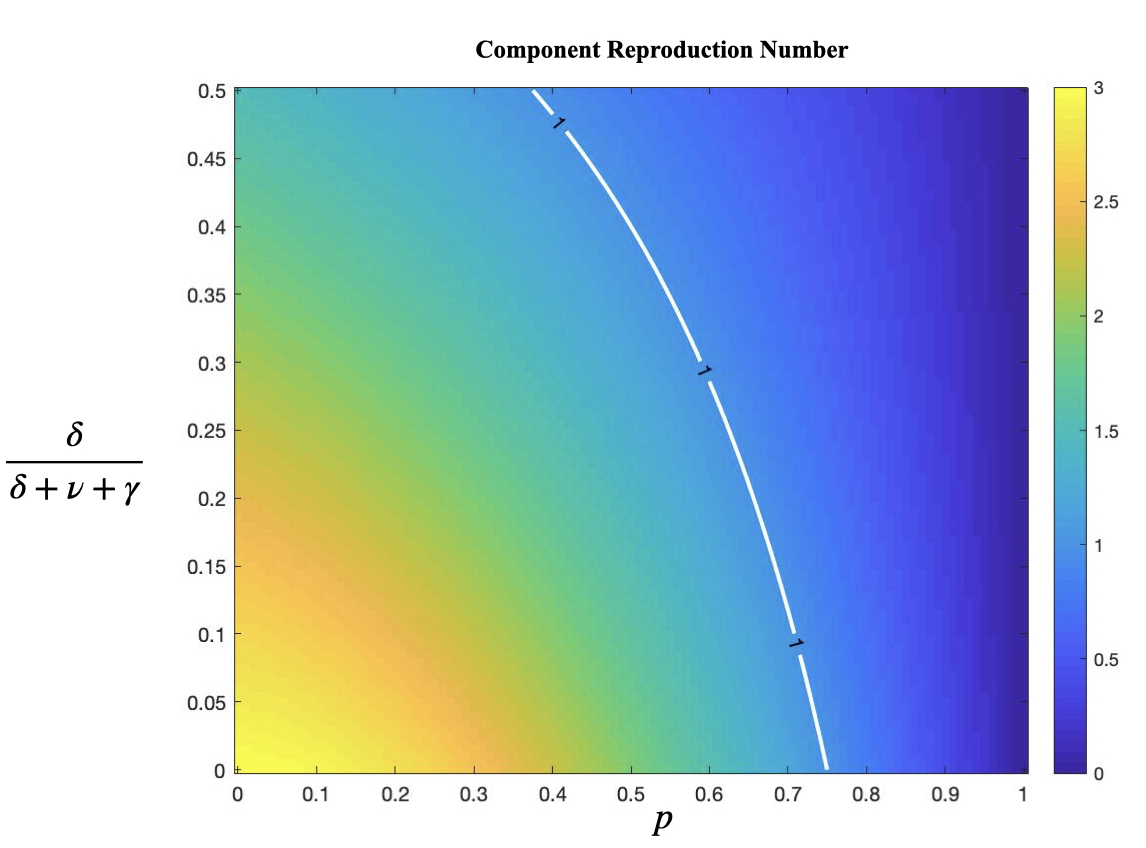}
\caption{}
\label{fig:R_c_newmodel}
\end{subfigure}
     \hfill
     \begin{subfigure}{0.5\textwidth}
\centering
\includegraphics[width=59mm]{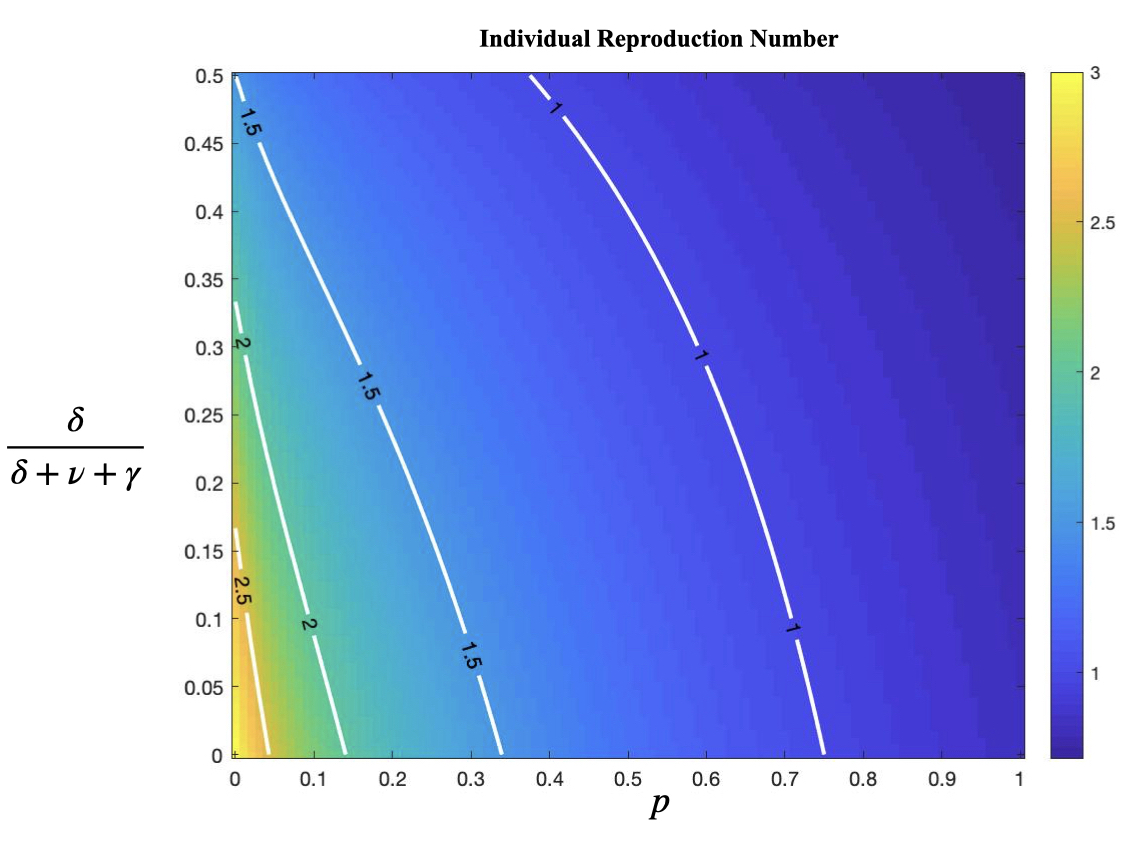}
\caption{}
\label{fig:R_ind_newmodel}
\end{subfigure}
\hfill
\begin{subfigure}{0.5\textwidth}
\centering
\includegraphics[width=59mm]{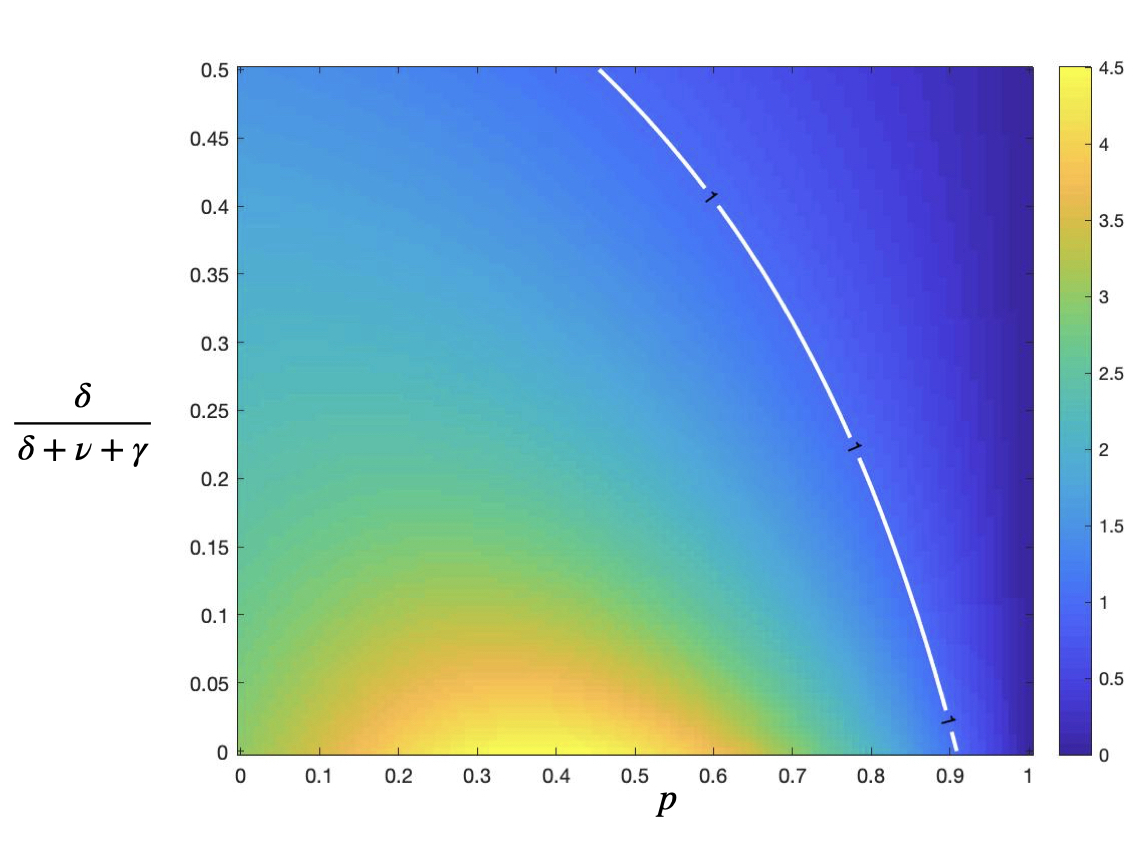}
\caption{}
\label{fig:R_c_newmodel_gamma}
\end{subfigure}
\begin{subfigure}{0.5\textwidth}
\centering
\includegraphics[width=59mm]{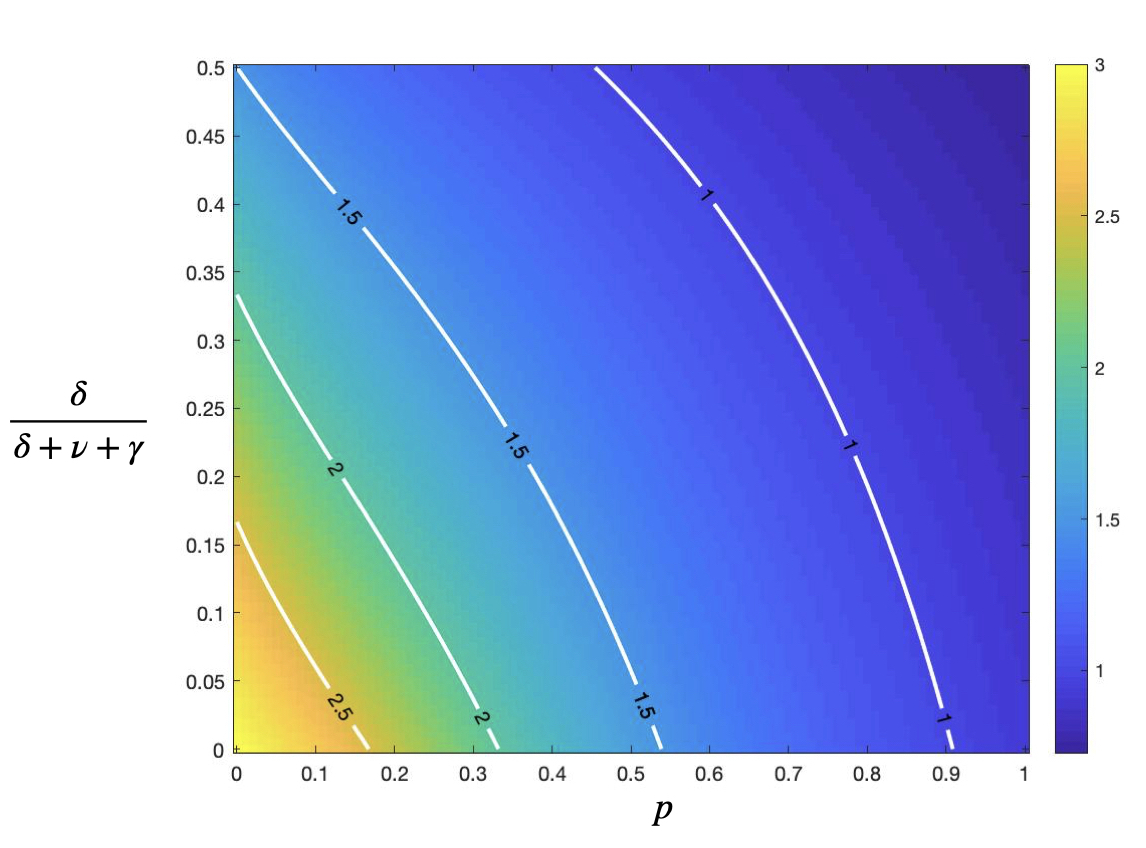}
\caption{}
\label{fig:R_ind_newmodel_gamma}
\end{subfigure}

\caption{Heat map of the effective reproduction numbers in the alternative model interpretation. Left panels are for $R^{(c)}_{*}$ and right panels for $R^{(ind)}_{*}$, varying  $ p$ in $[0,1] $ and $\delta/(\delta+\nu+\gamma)$ in $[0,0.5]$ with $\beta=0.75$ being fixed. In \ref{fig:R_c_newmodel} and \ref{fig:R_ind_newmodel} assuming that $\gamma=1/12$ and $\nu=1/6$ whereas in \ref{fig:R_c_newmodel_gamma} and \ref{fig:R_ind_newmodel_gamma} $\gamma=0.2$ and $\nu=0.05$.
The white lines show where $R^{(c)}_{*}=1$ in \ref{fig:R_c_newmodel}, \ref{fig:R_c_newmodel_gamma} and $R^{(ind)}_{*}=2.5, 2, 1.5, 1$ in \ref{fig:R_ind_newmodel}, \ref{fig:R_ind_newmodel_gamma} respectively
} 
\label{fig:R_newmodel}
\end{figure}

\section{Conclusions and Discussion}
\label{sec:discussion}

In the paper we have analysed a Markovian epidemic model also incorporating the effect of testing and contact tracing (the Markovian SIR-TT-model). By analysing the process of to-be-reported components, rather than individuals themselves, it was shown that the early stage of the epidemic could be approximated by a suitable branching process, and that if an epidemic takes off, its behaviour becomes less random as the population size $n$ increases. The reproduction numbers, both for the components as well as for the individuals, were derived. Their dependence on the amount of testing and effectiveness of contact tracing were evaluated analytically as well as numerically. It was observed that the tracing probability $p$ had a bigger impact on reducing the individual reproduction number as compared to the fraction being tested through screening, and this difference was even more pronounced in the situation when some infectives self-test also without being screened (the alternative model interpretation). Surprisingly, the reproduction number for the components was not monotonically decreasing in $p$, but the individual reproduction numbers seem to be (as expected).

There are several possible extensions to the model making it more realistic. For instance, the model assumes that there are no delays in either contact tracing or testing. The results in the present paper can hence be seen as a best possible scenario, but allowing for a delay would of course give information on how important such delays are and how much would be gained if contact tracing would be quicker. Further, we make the simplifying 
assumption that traced individuals who have by then recovered are also
contact traced (cf. \citep{muller_contact_2000} does not make 
this assumption). Further, the model assumes no latent period and that the infectious period follows an exponential distribution. Introducing a latent period most likely makes testing and contact tracing more effective in that individuals may get screened as well traced before even becoming infectious, but how to quantify this effect remains to be analysed. A different step towards realism would be to consider a structured community as opposed to the current assumption of a uniformly mixing community. Such structure could for example include households, spatial aspects, or some other network structures.

One the other hand, as the model was defined, only contacts that resulted in infection are considered for contact tracing. In reality it may of course also happen that contacts that did not result in infection are reported and traced. During the early phase of an epidemic such tracing events will rarely find new infected cases, but later in the epidemic when transmission is extensive it could (the individuals may have been infected by other individuals). Similarly, we do not consider contact tracing if an infectious 
individual has close contact with an individual who has already been 
infected, since such contact does not result in infection. To allow also for these type of contact tracing events is much harder to analyse and remains an open problem. 

On the mathematical side two conjectures deserve to be proven (or disproved). The first is the statement for the final size of the epidemic in case of a major outbreak starting with one infective (see Conjecture~\ref{conjecture}). As in many similar epidemic models it seems highly plausible that this limiting final size agrees with that of the deterministic process taking $t$ to infinity and looking at ever smaller initial starting fractions $\varepsilon$, but a proof of this is missing. In addition, Fig. \ref{fig:histogram} shows that the final size seems to follow a normal distribution around the deterministic limit for the case where there is a major outbreak. Then we suggest that a related central limit theorem could be an open problem to be shown. Further and perhaps a lower hanging fruit, is to compute the proper effective individual reproduction number (see Remark \ref{remark_ind}) or prove that the individual reproduction number $R^{(ind)}_{*}$ in this paper is the correct one and it is monotonically decreasing both in $p$ and testing fraction $\delta/(\delta+\gamma)$. 

From an applied point of view it is of course important to have parameter estimates in order to say something quantitatively useful. The model has four parameters: $(\beta, \gamma, \delta, p)$. The average infectious period $1/\gamma$ is quite often known from earlier studies, and when the basic reproduction number $R_0=\beta /\gamma$ is known, estimates of $\beta$ would also be available. Nevertheless, the test-and-trace parameters $\delta$ and $p$ may be harder to estimate. In the case that testing comes from general broad screening it could be very well available: if for instance $1\%$ of the community is tested each day would lead to $\delta =0.01$ with day as time unit. If testing is targeted towards suspected cases it might be harder to know the rate $\delta$ at which infectious people are tested. Finally, estimates of the fraction $p$ of all infectious contacts that were detected by contact tracing is often hard to obtain. Perhaps a rough estimate could be obtained from studies investigating different type of contacts and how many infections they are responsible for. There are some statistical methods developed to estimate the tracing probability, e.g. a maximum-likelihood estimator in \citep{muller_estimating_2007} and an approximate Bayesian computation in \citep{blum_hiv_2010}. When it comes to digital contact tracing (by means of mobile tracing apps), the tracing probability $p$ would approximately correspond to the square of the app-using fraction. With higher app adoption, app contact tracing is expected to be more effective as compared with the traditional contact tracing, potentially due to the quicker identification and notification of infectious contacts (see e.g. \citep{Jenniskense050519,ferretti_quantifying_2020}). 

Analyses of epidemic models incorporating various preventive measures, and statistical studies relating to them, remains a research area deserving more attention in the future.

\bmhead{Acknowledgments}
T.B. is grateful 
to the Swedish Research Council (grant 2020-04744) for financial support.

\section*{Declarations}
\begin{itemize}
\item Funding:\newline
See acknowledgements.
\item Conflict of interest: \newline
We declare that we have no conflict of interest.
\item Ethics approval: \newline
Not applicable.
\item Consent to participate: \newline
Not applicable.
\item Consent for publication: \newline
Not applicable.
\item Availability of data and materials: \newline 
The paper uses no data or material.
\item Code availability: \newline
D.Z. used Matlab for all simulations and numerical illustrations. Codes are available upon request. 
\item Authors' contributions: \newline
T.B. initiated the project, the theory was developed jointly, D.Z. performed simulations and numerical computations. D.Z. wrote most of the first draft and both revised the manuscript.
\end{itemize}

\noindent

\bibliography{main}

\end{document}